\documentclass[11pt]{article}
\usepackage[utf8]{inputenc}
\usepackage{setspace}
\usepackage[margin=1in]{geometry}
\usepackage{apacite}
\usepackage{blindtext}
\usepackage{amsmath,bm}
\usepackage{color}
\usepackage{rotating}
\usepackage{caption}
\usepackage{subcaption}
\usepackage{grffile}
\usepackage{graphicx}
\usepackage{lscape}
\usepackage{blindtext}
\usepackage{amssymb}
\usepackage{varioref} 
\usepackage{textcomp}
\usepackage{booktabs}
\usepackage{multirow}
\usepackage{pdflscape}
\usepackage{graphicx}
\usepackage{relsize}
\usepackage{tablefootnote}
\usepackage{wrapfig}
\usepackage{natbib}
\usepackage{listings}
\usepackage{longtable}
\usepackage{url}
\usepackage{tabularx}
\usepackage{float}
\usepackage{epsfig}
\usepackage[titletoc]{appendix}
\usepackage{arydshln}
\usepackage{footnote}
\usepackage{secdot}
\usepackage{bigints}
\usepackage{amsmath}
\usepackage{amssymb}
\usepackage{amsthm}
\usepackage[english]{babel}
\usepackage{xr}
\usepackage{adjustbox}
\usepackage{graphicx} 

\usepackage{multicol}
\usepackage{enumitem}
\usepackage{bigstrut}
\usepackage{rotating}
\usepackage{xr}
\usepackage{sectsty}
\subsubsectionfont{\normalfont\itshape}
\usepackage{algorithm}
\usepackage{algpseudocode}
\newtheorem{theorem}{Theorem}[section]

\newtheorem{result}{Result}[section]

\begin{document}

\title{Optimal Accelerated Life Testing Sampling Plan Design with Piecewise Linear Function based Modeling of Lifetime Characteristics}

\author{\bf{Sandip Barui${}^{1,\footnote{Corresponding author. E-mail address: sandipbarui@isical.ac.in \newline \indent}}$ and Shovan Chowdhury${}^{2}$}  \\\\
${}^{1}$Interdisciplinary Statistical Research Unit, \\ Indian Statistical Institute, Kolkata, 700108, West Bengal, India\\ [0.5ex]
${}^{2}$Quantitative Methods and Operations Management Area,\\ 
Indian Institute of Management Kozhikode, Kozhikode, 673570, Kerala, India.\\
}
\date{}

\maketitle

\begin{abstract}
Researchers have widely used accelerated life tests to determine an optimal inspection plan for lot acceptance. All such plans are proposed by assuming a known relationship between the lifetime characteristic(s) and the accelerating stress factor(s) under a parametric framework of the product lifetime distribution. As the true relationship is rarely known in practical scenarios, the assumption itself may produce biased estimates that may lead to an inefficient sampling plan. To this endeavor, an optimal accelerating life test plan is designed under a Type-I censoring scheme with a generalized link structure similar to a spline regression, to capture the nonlinear relationship between the lifetime characteristics and the stress levels. Product lifetime is assumed to follow Weibull distribution with non-identical scale and shape parameters linked with the stress factor through a piecewise linear function. The elements of the Fisher information matrix are computed in detail to formulate the acceptability criterion for the conforming lots. The decision variables of the sampling plan including sample size, stress factors, and others are determined using a constrained aggregated cost minimization approach and variance minimization approach. A simulated case study demonstrates that the nonlinear link-based piecewise linear approximation model outperforms the linear link-based model.
\end{abstract}

{\bf Keywords}: Constrained optimization,  Non-constant lifetime characteristics, Type-I censoring, Maximum likelihood estimation,  General rebate warranty

\begin{table}[!htbp]
\centering
\caption{Table for some essential$^*$ abbreviations}\label{Table00}
\begin{tabular}{l l }
\hline
 AIC & Akaike information criterion \\
 LTSP & life testing sampling plan \\
 ALT & Accelerated life test \\
 ALTSP & Accelerated life testing sampling plan\\
 SD & Standard deviation \\
 COV & Covariance \\
 CDF & Cumulative distribution function \\
 PDF & Probability density function\\
 EV & Extreme valued \\
 SEV & Standard extreme valued \\
 ML & Maximum likelihood \\
 OC & Operating characteristic \\
 PL & Piecewise linear \\
 PLA & Piecewise linear approximation\\
 PLF & Piecewise linear function\\
 SSE & Sum of squared errors\\ 
\hline
\end{tabular}  
\\
\footnotesize{*Some non-essential abbreviations are omitted for the sake of brevity. These are defined within the article.}
\end{table}

\begin{table}[!htbp]
\centering
\caption{Table for some essential$^*$ notations and symbols}\label{Table0}
\resizebox{\columnwidth}{!}{
\begin{tabular}{l l }
\hline
$N, n, m$ & Lot size, sample size and number of stress levels respectively\\
$s_0, \dots, s_m$ & Accelerating levels of the stress factor\\
$\xi_0, \dots, \xi_m$ & Standardized stress levels\\
$n_i, \pi_i$ & Sample size and proportion of allocation for the stress level $s_i$\\
$X_{ij}$ & Random lifetime of the $j$-th item under the stress level $s_i$\\ 
$T_{ij}$ & Random log-lifetime of the $j$-th item under the stress level $s_i$\\ 
$f_i(.), F_i(.)$ & pdf and cdf of Weibull lifetime distribution respectively\\ 
                & under the stress level $s_i$ \\
$\alpha_i, \lambda_i$ & Shape and scale parameters of Weibull distribution respectively\\
 & under the stress level $s_i$\\
$g_i(.), G_i(.),\bar{G}_i(.), h_i(.)$ & pdf, cdf, reliability and hazard functions of EV distribution respectively \\
& under the stress level $s_i$\\
$\mu_i, \sigma_i$ & Location and scale parameters of EV distribution respectively\\
& under the stress level $s_i$\\
$Q_1, Q_2$ & No. of cut-points for the link functions of $\mu_i$ and $\sigma_i$ respectively\\
$\Xi_{\mu}=\{\xi_{\mu, 0}, \dots, \xi_{\mu, Q_1}\}$& Set of cut-points to link $\mu_i$\\
$\Xi_{\sigma}=\{\xi_{\sigma, 0}, \dots, \xi_{\sigma, Q_2}\}$& Set of cut-points to link $\sigma_i$\\
$\bm \gamma_{\mu}=(\gamma_{\mu, 0}, \dots, \gamma_{\mu, Q_1})^{\intercal}$ & Parameter vector associated with linking $\mu_i$\\
$\bm \gamma_{\sigma}=(\gamma_{\sigma, 0}, \dots, \gamma_{\sigma, Q_2})^{\intercal}$ & Parameter vector associated with linking $\sigma_i$\\
$L_i$ & Likelihood function under the stress level $s_i$\\
$\hat l$ & Maximized value of the log-likelihood function \\
$\tau_0$ & Censoring time following Type-I time (right) censoring scheme\\  
$\bm \theta=(\bm \gamma_{\mu}^{\intercal}, \bm \gamma_{\sigma}^{\intercal})^{\intercal}$ & Full parameter vector\\
$\hat{\psi}$ & Maximum likelihood estimate of a parameter $\psi$\\
$F(\bm \theta), F^{-1}(\bm \theta)$ & Fisher information and covariance matrix respectively\\
$\mathbb{E}, \mathbb{V}$ & Expectation and variance \\
$k$ & Acceptability constant \\
$W$ & Statistic required to specify the acceptance criterion\\
$\sigma^2_{\mu_0}, \sigma^2_{\sigma_0}$ & Variances of $\widehat{\mu_0}$ and $\widehat{\sigma_0}$ repectively\\
$p_{nc}$ & Proportion of defectives or non-conforming items \\
${\alpha}, {\beta}$ & Producer’s risk and consumer’s risk respectively\\
$p_{\alpha}, p_{\beta}$ & Fraction of nonconforming items to be accepted \\
& with probability $1-\alpha$ and $\beta$ respectively\\
$z_{\alpha}$, $z_{1-\beta}$ & $\alpha$-th and $(1-\beta)$-th quantiles of \\
& standard normal distribution respectively\\
$u_{p_{\alpha}}$, $u_{p_{\beta}}$ & $p_{\alpha}$-th and $p_{\beta}$-th quantiles of \\
& standard EV distribution respectively\\
$c_a$ & Cost due to free replacement warranty\\
$\omega_1, \omega_2$ & Time since purchase of a product till which the free  replacement\\ 
& warranty is valid and time since expiry of free replacement \\
& warranty till which pro-rata warranty is valid respectively\\
$c_r, c_t, c^*$ & Unit cost of rejection, per unit time consumption \\
&  cost and unit cost of inspection respectively\\
$C_T$, $C_{\min}$ & Objective aggregate cost function and minimized value respectively \\
$\mathcal{V}_Q$, $V_{\min}$ & Objective variance of aggregate quantile of log-lifetime\\
&  function and minimized value respectively \\
\hline
\end{tabular}  
}\\
\footnotesize{*Some non-essential notations are omitted for the sake of brevity. These are defined within the article.}
\end{table}

\section{Introduction}\label{sec1}
Sampling inspection plays a significant role in determining the acceptance of lots in life testing experiments. The lifetime of durable products, not being an instantaneously measurable quality characteristic, requires to be inspected by an optimally designed life testing sampling plan (LTSP), also known as reliability sampling plan. The decision to accept or reject lots can be taken only when a requisite number of products fail within a smaller test duration. To ensure this, different censoring schemes and accelerating stress factors (viz., mechanical stress, temperature, voltage, humidity, and others) are employed which eventually make the experiment expensive. Hence, it is recommended for the test to be optimal with respect to certain technical criteria, e.g., total cost of the experiment, variance of an important lifetime characteristic, conditional value-at-risk, and so on. Based on these criteria, the optimal choice of the sample size, parameter settings for the censoring schemes, and levels of the accelerating factors are determined. Many life test sampling plans are proposed in the literature; see, e.g., \cite{chen2004optimal}, \cite{chen2004designing}, \cite{huang2008decisions}, \cite{tsai2008reliability}, \cite{hsieh2013risk}, 
 \cite{bhattacharya2015computation}, 
 \cite{fernandez2017economic}, \cite{chakrabarty2020optimum}, \cite{perez2023optimal}, 
 \cite{fernandez2023optimum} to name a few. \\

A life testing experiment is usually discontinued following a specific censoring scheme or combination of multiple censoring schemes. An appropriate scheme is chosen mostly based on applicability, necessity, and convenience. Commonly used censoring schemes include Type-I or time censoring (\citealp{kim2009reliability, li2018bayesian, seo2009design}), Type-II or item censoring (\citealp{chen2007bayesian}), generalized hybrid censoring (\citealp{chakrabarty2021optimum}), random censoring (\citealp{chen2004optimal}), progressive censoring (\citealp{fernandez2011design, lee2024construction}) and so on. Type-I censoring is one of the most popular censoring schemes where a future time point or time duration ($\tau_0 >0$) is pre-determined by the manufacturer or experimenter. Whenever a product unit fails in $[0,\tau_0)$, the failure time is observed. However, for a unit not failing in $[0, \tau_0)$, $\tau_0$ is considered as the censoring time for this unit. Therefore, the information obtained from the censored units is incomplete. \\              

Any LTSP designed under an accelerated set-up is popularly known as an accelerated life test (ALT) sampling plan. Most ALT studies assume that the relationship between a lifetime characteristic and accelerating stress factor is known under a parametric framework of the product lifetime distribution. A log-linear link function between accelerating stress and lifetime characteristics, e.g., mean or standard deviation (SD) or both, have been widely studied in literature; see, e.g., \cite{gouno2004optimal, pascual2005lognormal, 
bunea2006competing,  watkins2008constant, seo2009design, nelson2009accelerated, tseng2009optimal, wu2020optimal, ma2021optimal, chakrabarty2022economic,  zheng2023reliability}. An Arrhenius relationship between logarithmic mean and accelerating stress for a lognormal lifetime distribution (\citealp{nelson1971analysis}) and an inverse power law relationship between the scale parameter and accelerating stress for a Weibull lifetime distribution have also been studied (\citealp{nelson1980accelerated}). Further, there are studies where the location parameter related to lifetimes has been linked to the linear function of an accelerating stress factor (\citealp{xu2009optimum, alhadeed2005optimal}). While \cite{meeter1994optimum} investigated the effect of linear and log-linear relationships between the location and scale parameters with accelerating stress, \cite{seo2009design} examined log-linear relationships for both scale and shape parameters with accelerating stress for Weibull lifetime distribution. Both these articles developed accelerated life testing sampling plans (ALTSPs) for products having Weibull distributed lifetime with non-identical parameters. \\

In most practical scenarios, the true relationship between product lifetime or lifetime characteristics (e.g., mean, SD, skewness, etc.) is rarely known. In all aforementioned studies, assumptions of linearity, log-linearity, or other reasonable mathematical functions under a parametric framework are considered. Therefore, it is only logical to make assumptions about the relationship between accelerating stress and lifetime characteristics to be general, flexible, and inclusive which can capture the true relationship conveniently. To that end, \cite{si2017accelerated} used constrained penalized B-spline regression to study the relationship between lifetime characteristics and accelerating stress factors under the ALT framework. Along similar lines, in this paper, we propose a spline-like regression of lifetime characteristics on accelerating stress while considering product lifetimes to assume a Weibull distribution. We link the shape and scale parameter of the product lifetime distribution to a stress factor through a piecewise linear (PL) function, therefore, these lifetime characteristics can vary according to the stress levels. Here, note that the shape and scale parameter of the product lifetime distribution is related directly to the lifetime characteristics, such as mean, SD, etc. The piecewise linear function is marked by user-defined knots or cut-points over the range of values taken by the stress factor. In the case of standardized stress, the cut-points can be chosen in the interval $[0,1]$. These cut-points create a non-overlapping sequence of intervals, and two different intervals may represent linear functions with different intercepts and slopes. Hence, the overall PL function-based link is non-linear and customizable in nature. The number of cut-points to be chosen depends on the user or manufacturer. A higher number of cut-points provides greater non-linearity and flexibility to the linking structure. However, increasing the number of cut-points may lead to greater problem complexity and overfitting. Application of the PL function in survival analysis has been well explored; see \cite{Lar85,Lo93,Che01,Bal16b,Bala-Barui22,pal2022stochastic}.  We could also have considered a smoothing spline with higher order, however, that would have increased the number of model parameters substantially and made the optimization problem even more cumbersome and intractable \cite{hastie2009elements}. \\

As mentioned earlier, our proposed framework is more general and flexible to capture the true nature of the relationship between accelerating stress and lifetime characteristics, hence,  more accurate estimates can be obtained, and an efficient sampling plan can be designed. The work, that ensues, has the following aspects: \\
\begin{enumerate}
    \item The product lifetimes are assumed to follow non-identical but independent Weibull distribution. Weibull distribution is a member of the log-location scale family of distributions and is commonly used to analyze lifetime data. It has several interesting properties including a wide variety of shapes for density and hazard functions.
    \item Only one stress factor with multiple accelerated levels is considered for simplicity and tractability. The log-lifetime distribution characteristics (location and scale parameters as shown in the next section) are linked to the accelerating stress levels via the PL-based link functions.
    \item Items are subjected to a Type-I censoring scheme, due to its convenience. 
    \item A theoretically implicit Fisher information matrix is derived. The computation task of the elements of the matrix is carried out in great detail.
    \item The matrix elements are used to formulate and design a statistic. The asymptotic properties of the statistic are derived. Based on the properties, the acceptability criterion for a conforming product unit is suggested.    
    \item A constrained optimal ALTSP is proposed and designed based on aggregate cost minimization of products and overall variance minimization of the quantiles of product log-lifetime.  
\end{enumerate}

This article is organized in the following sequence. In Section \ref{sec2}, we extensively develop the model framework that includes defining the piecewise linear link function and deriving the likelihood function under the Type-I censoring scheme, and the corresponding Fisher information matrix. In Section \ref{sec3}, the operating characteristic (OC) function is defined where the development of an unbiased statistic is discussed, and derivation of the acceptability constant is carried out that ensures product quality standard. In Section \ref{sec4}, optimal ALTSPs are designed with respect to aggregate cost minimization of products under warranty and overall variance minimization of the quantiles of the log-lifetime. In Section \ref{sec5} results from the optimal ALTSPs under six different input parameter scenarios are discussed. In Section \ref{sec6} a simulated case illustrates a better fit of the PLA-based modelling of the lifetime characteristics when compared to linear link-based modelling. Finally, we summarize the findings in our paper in Section \ref{sec7} and provide a few future directions.

%%%%%%%%%%%%%%%%%%%%%%%%%%%%%%%%%%%%%%%%%%%%%%%%%%%%%%%%%%%%%%%%%%%%%%%%%%%
%%%%%%%%%%%%%%%%%%%%%%%%%%%%%%%%%%%%%%%%%%%%%%%%%%%%%%%%%%%%%%%%%%%%%%%%%%%
%%%%%%%%%%%%%%%%%%%%%%%%%%%%%%%%%%%%%%%%%%%%%%%%%%%%%%%%%%%%%%%%%%%%%%%%%%%
%%%%%%%%%%%%%%%%%%%%%%%%%%%%%%%%%%%%%%%%%%%%%%%%%%%%%%%%%%%%%%%%%%%%%%%%%%%
\section{Model Framework}\label{sec2}

\subsection{Product lifetime distribution}\label{sec2_1}
Throughout the paper, a single stress factor with $m$ increasing levels, namely, $s_1< \dots< s_m$ is considered with $s_0$ as the usage condition stress level. For $i=0, 1, \dots, m$, let $\xi_i=\frac{s_i-s_0}{s_m-s_0}$ be the standardized stress levels with $0 = \xi_0 < \xi_1 < \dots < \xi_m = 1$. Let $X_{ij}$ denote the actual lifetime of the $j$-th product unit subjected to the stress level $s_i$. For the purpose of modeling, the lifetime distribution of $X_{ij}$ can be considered parametric, semi-parametric, or distribution-free. However, due to its simplicity and greater flexibility with respect to the shape ($\alpha_i>0$) and scale ($\lambda_i>0$) parameters, a non-identical family of Weibull distributions is considered to characterize the product lifetimes (\citealp{kotz2004continuous}). Therefore, for $i=0, \dots, m$ and $x \ge 0$, 
\begin{equation}
f_{i}(x)=\left(\alpha_i \lambda_i^{\alpha_i}\right) x^{(\alpha_i-1)} e^{-\left(\lambda_i x\right)^{\alpha_i}} 
\end{equation}
and
\begin{equation}
F_{i}(x)= 1-e^{-\left(\lambda_i x\right)^{\alpha_i}} 
\end{equation}
are, respectively, the probability density function (pdf) and cumulative distribution function (cdf) of $X_{ij}$. Let $n_i$ be the number of units tested under stress level $s_i$. Clearly, $X_{ij}$ and $X_{ij'}$ are independently and identically distributed (iid) with  Weibull distribution having shape parameter $\alpha_i$ and scale parameter $\lambda_i$ for the units $j$ and $j'$ tested under the same stress level $s_i$. However, $X_{ij}$ and $X_{i'j'}$ are not identical (but independent) for the units $j$ and $j'$ tested under different stress levels $s_i \ne s_{i'}$. The additional flexibility consideration of the non-identical product lifetimes, where both shape and scale parameters are non-constant for different units under different stresses, ensures further model generalization. For more details on the practical implications of the non-constant shape parameter of Weibull distribution, one can refer to \cite{seo2009design}. The independence of lifetimes assumption has been kept in our model as it seems only logical that the lifetime of a unit is not influenced by the lifetime of another unit tested under the same or different stress levels. Therefore, for $i=0, \dots, m$, $-\infty<\mu_i<\infty$, $\sigma_i>0$ and $-\infty < t < \infty$, 
\begin{equation}\label{EQ2}
G_{i}(t)= 1-\exp\left[-e^{\left(\frac{t-\mu_i}{\sigma_i}\right)}\right] 
\end{equation}
is the cdf of $T_{ij}$ where $T_{ij}=\ln\{X_{ij}\}$ and 
\begin{equation} \label{SEV_pdf}
g_i(t)=\frac{d}{dt}G_i(t)={\sigma_i^{-1}}\exp\left[-e^{\left(\frac{t-\mu_i}{\sigma_i}\right)}\right] e^{\left(\frac{t-\mu_i}{\sigma_i}\right)}
\end{equation}
is the pdf of $T_{ij}$. Note that taking the transformation $X_{ij} \to T_{ij}$ converts the family of product lifetimes distribution to location-scale extreme-valued (EV) distribution where $\mu_i=-\ln\{\lambda_i\}$ and $\sigma_i=\alpha_i^{-1}$ (\citealp{bhattacharya2015computation}). The parameters $\mu_i$ (location) and $\sigma_i$ (scale) represent and involve product log-lifetime characteristics. For example, the mean and variance of log-lifetime for products tested under stress $s_i$ are given by $\mu_i-\sigma_i\gamma$ and $\sigma^2_i\left(\frac{\pi^2}{6}\right)$ respectively, where $\gamma \approx 0.5772$ is known as the Euler's constant. The hazard function of $T_{ij}$ is given by  \begin{equation}\label{haz}
h_i(t)=\frac{g_i(t)}{1-{G}_i(t)}=\sigma_i^{-1} e^{\left(\frac{t-\mu_i}{\sigma_i}\right)}    
\end{equation}
for $i=1, \dots, m$ and $j=1, \dots, n_i$. It is important to note that the log-lifetime characteristics $\mu$ and $\sigma$ can be related to stress ($s$), and hence, to standardized stress ($\xi$) linearly or non-linearly. This further implies that the lifetime (as opposed to log-lifetime) characteristics $\alpha$ and $\lambda$ can be related to stress ($s$) linearly or non-linearly. A detailed literature review and discussion on the non-linear relationship between lifetime characteristics and stress are given in Section \ref{sec1}.     

%%%%%%%%%%%%%%%%%%%%%%%%%%%%%%%%%%%%%%%%%%%%%%%%%%%%%%%%%%%%%%%
%%%%%%%%%%%%%%%%%%%%%%%%%%%%%%%%%%%%%%%%%%%%%%%%%%%%%%%%%%%%%%%
%%%%%%%%%%%%%%%%%%%%%%%%%%%%%%%%%%%%%%%%%%%%%%%%%%%%%%%%%%%%%%%
%%%%%%%%%%%%%%%%%%%%%%%%%%%%%%%%%%%%%%%%%%%%%%%%%%%%%%%%%%%%%%%
%%%%%%%%%%%%%%%%%%%%%%%%%%%%%%%%%%%%%%%%%%%%%%%%%%%%%%%%%%%%%%%

\subsection{Piecewise Linear Approximation of Lifetime Characteristics based on Accelerating Stress}\label{sec2_2}
%%%%%%%%%%%%%%%%%%%%%%%%%%%%%%%%%%%%%%%%%%%%%%%
As mentioned before, the log-lifetime parameters $\mu=\mu(\xi)$ and $\sigma=\sigma(\xi)$ are let to depend on the standardized stress $\xi$. Let  $Q_1$ and $Q_2$ denote 
numbers of cut-points (or knots) required to model $\mu(\xi)$ and $\sigma(\xi)$, respectively, where $Q_1$ and $Q_2$ are user-defined positive integers.  Also, let
$$\Xi_{\mu}=\{\xi_{\mu, 0}, \dots, \xi_{\mu, Q_1}; \xi_{\mu,0} < \dots < \xi_{\mu, Q_1}\}$$ and 
$$\Xi_{\sigma}=\{\xi_{\sigma, 0}, \dots, \xi_{\sigma, Q_2}; \xi_{\sigma,0} < \dots < \xi_{\sigma, Q_2}\}$$ be user-defined sets of cut-points. For $i=0, \dots, m$, the location parameter $\mu_i$ and the scale parameter $\sigma_i$ of the EV distribution of $T_{ij}=\ln\{X_{ij}\}$ are assumed to be linked with the standardized stress $\xi_i$ in the following manner: 
\begin{equation}\label{PLAdefmu}
    \mu_i=\mu(\xi_i; \bm \gamma_{\mu})=\sum_{q=1}^{Q_1}\left[\gamma_{\mu, q}+ \left\{\frac{\gamma_{\mu, q}-\gamma_{\mu, q-1}}{\xi_{\mu,q}-\xi_{\mu, q-1}}\right\} (\xi_i-\xi_{\mu, q}) \right] I_{[\xi_{\mu,q-1},\xi{\mu,q}]}(\xi_i)
\end{equation} and 

\begin{equation}\label{PLAdefsig}
   \ln \sigma_i=\ln \sigma(\xi_i; \bm \gamma_{\sigma})=\sum_{q=1}^{Q_2}\left[\gamma_{\sigma, q}+ \left\{\frac{\gamma_{\sigma, q}-\gamma_{\sigma, q-1}}{\xi_{\sigma,q}-\xi_{\sigma,q-1}}\right\} (\xi_i-\xi_{\sigma,q}) \right] I_{[\xi_{\sigma, q-1},\xi_{\sigma, q}]}(\xi_i)
\end{equation}
where \begin{equation*}
I_{[a,b]}(x)=\begin{cases}
1, & \text{if } x \in [a,b]\\
0, & \text{otherwise}
\end{cases}
\label{Indicator}    
\end{equation*}
for any real valued $b > a$. The models defined in the Equations (\ref{PLAdefmu}) and (\ref{PLAdefsig}) are known piecewise linear models. $\gamma_{\mu, q}; q=0, \dots, Q_1$ and $\gamma_{\sigma, q'}; q'=0, \dots, Q_2$ are the model parameters determining the intercept and slope of the link functions within the intervals $[\xi_{\mu, q-1}, \xi_{\mu, q}]$ and $[\xi_{\sigma, q-1}, \xi_{\sigma, q}]$, respectively. \\

For $i=0, \dots, m$,  the following assumptions are used:
$$\mu(\xi_{\mu,q})=\displaystyle{\lim_{\xi_i \to \xi_{\mu, q}} \mu(\xi_i; \bm \gamma_{\mu})}=\gamma_{\mu,q};~q=0, \dots, Q_1,$$
and
$$\sigma(\xi_{\sigma,q})=\displaystyle{\lim_{\xi_i \to \xi_{\sigma, q}} \sigma(\xi_i; \bm \gamma_{\sigma})}=\gamma_{\sigma,q};~q=0, \dots, Q_1$$
to ensure continuity at the cut-points. Further, note that the $\ln\{.\}$ transformation for $\sigma_i$ further ensures the non-negativity of the scale parameter. Therefore, $\mu(\xi)$ and $\sigma(\xi)$ are approximated by the piecewise linear functions, or estimated by piecewise linear approximations (PLAs), where the user needs to define the sets $\Xi_{\mu}$ and $\Xi_{\sigma}$.  For brevity, let $\bm \gamma_{\mu}=(\gamma_{\mu,0}, \dots, \gamma_{\mu,Q_1})^{\intercal}$ and $\bm \gamma_{\sigma}=(\gamma_{\sigma,0}, \dots, \gamma_{\sigma,Q_2})^{\intercal}$ where
 $\bm \gamma_{\mu}$ and $\bm \gamma_{\sigma}$ are to be estimated by implementing the maximum likelihood estimation method (discussed in Section \ref{sec2_3}). \\

 \subsubsection{Choice of Knots}\label{sec2_1_1}
 The choice of the cut-points, i.e., elements of the sets $\Xi_{\mu}$ and $\Xi_{\sigma}$ depends on the user. As a recommendation, we suggest that one should first decide on the number of linear functions by which the true relationship between any lifetime characteristic and stress is to be approximated (i.e., the values of $Q_1$ and $Q_2$ in our case). Generally, higher values of $Q_1$ (or $Q_2$) provide a better and more refined fit to the true relationship curve. However, increasing $Q_1$ (or $Q_2$) increases the number of parameters to be estimated resulting in the problem of overfitting. Hence, solving optimization problems either becomes computationally intensive and challenging, or a solution may not exist at all. Once $Q_1$ (or $Q_2$) is chosen, one way to choose the cut-points $\xi_{\mu, q}; q=0, \dots, Q_1$ (or $\xi_{\sigma, q'}; q'=0, \dots, Q_2$) is by considering them suitably from the set of equispaced quantiles of the stress levels $\xi_0, \dots, \xi_m$. That is, $$\Xi_{\mu}=\left\{\xi_{\mu, 0}=\min\{\xi_i\}=0,  \xi_{\mu, 1}=\tilde{q}_1 , \dots, \xi_{\mu, Q_1-1}=\tilde{q}_{Q_1-1}, \xi_{\mu, Q_1}=\max\{\xi_i\}=1\right\}$$
 where $\left\{\# j:  j \le  \tilde{q}_j\right\}=\left(\frac{j}{m+1}\right); j=1, \dots, Q_1-1$ 
 and $$\Xi_{\sigma}=\left\{\xi_{\sigma, 0}=\min\{\xi_i\}=0,  \xi_{\sigma, 1}=\tilde{q'}_1 , \dots, \xi_{\sigma, Q_2-1}=\tilde{q'}_{Q_2-1}, \xi_{\sigma, Q_2}=\max\{\xi_i\}=1\right\}$$
 where $\left\{\# j:  j \le  \tilde{q'}_j\right\}=\left(\frac{j}{m+1}\right); j=1, \dots, Q_2-1$. Note that, in the case taken above, both $Q_1 \le m+1$ and $Q_2 \le m+1$. Further, when data on stress levels are not present as in the case of considering the model as a part of an optimization problem where $\xi_i$s one requires the optimized value of $\xi_i$s, then one may take $\tilde{q}_j=\tilde{q'}_j=\left(\frac{j}{m+1}\right)$. Other methods of choosing the cut-points have been discussed in the literature. Interested readers may look into the works of  \cite{Bal16b}, \cite{Bala-Barui22}, \cite{pal2022stochastic} to gain further insights into PLA. \\

\subsubsection{Example of PLA-based Model}\label{sec2_1_2}
We consider the following example to motivate readers in favor of PLA-based models. For any standardized stress $\xi \in [0,1]$, we define the true relationship between a lifetime characteristic $\theta (\xi)$ and stress $\xi$ as  $$\theta (\xi)= \left(\frac{\xi-0.5}{0.2 \sqrt{2\pi}}\right) \exp\left\{-0.5\left(\frac{\xi-0.5}{0.2}\right)^2\right\}.$$ We fit the following functions of $\xi \in [0, 1]$ to approximate (estimate) $\theta(\xi)$: linear, logarithmic, inverse, combination of inverse and logarithmic, cubic polynomial, and piecewise linear functions (PLA1 - PLA3). The exact form and detail of each link (approximating) function are given in Table \ref{Table1}. PLA1 - PLA3 represent piecewise linear functions as defined in Equation (\ref{PLAdefmu}) or (\ref{PLAdefsig}) with 3 respective choices of cut-points, namely, $\Xi_{\theta}=(0.00,0.30,0.70,1.00)^{\intercal}$, $\Xi_{\theta}=(0.00,0.25,0.50,0.75,1.00)^{\intercal}$ and $\Xi_{\theta}=(0.00,0.30,0.50,0.70,1.00)^{\intercal}$. As defined by the user, PLA1 approximates the true function by three 3 piecewise linear functions whereas PLA2 and PLA3 use 4 piecewise linear functions to approximate the true function. For PLA1 and PLA3, cut-points are chosen visually in $[0, 1]$ by considering the points of inflection (change of slope) of the true function. For PLA2 cut-points are chosen as equispaced quantiles from a uniform distribution on $[0, 1]$, i.e. the points $\{0.00, 0.25, 0.50, 0.75, 1.00\}$. The function parameters are estimated by minimizing the sum of squared errors (SSEs), the minimized SSEs are displayed in Table \ref{Table1}.  It can be observed that PLA1 and PLA3 provide best fit to the true non-linear relationship in terms of SSE, even better than cubic polynomials. A simple graph demonstrating the advantage of fitting PLA over other standard relationships (linear, logarithmic, inverse, cubic, etc.) is displayed in Figure \ref{Fig1}, especially when the true relationship is non-linear and non-monotone.\\
% Add more on PLA here:  

%%%%%%%%%%%%%%%%%%%
%%%%%%%%%% Table 1  
\begin{table}[!htbp]
\caption{Comparison with respect to SSE among the fitted models}\label{Table1}
\centering
\begin{tabular}{c | c}
\hline
{\bf Fitted model } & {\bf SSE}  \\
\hline
Linear: $\theta(\xi)=\gamma_0+\gamma_1 \xi$ &  12.2316 \\

Logarithmic: $\theta(\xi)=\gamma_0+\gamma_1 \ln\{\xi\}$  & 18.3569  \\

Inverse: $\frac{1}{\theta(\xi)}=\gamma_0+\gamma_1 \xi$  &  12.7761 \\

Combination: $\frac{1}{\theta(\xi)}=\gamma_0+\gamma_1 \ln\{\xi\}$  &  17.3678 \\
Cubic: $\theta(\xi)=\gamma_0+\gamma_1 \xi+\gamma_2 \xi^2+\gamma_3\xi^3$ &  1.6923 \\

PLA1: $\Xi_{\theta}=(0.00,0.30,0.70,1.00)^{\intercal}$ & 0.5999  \\
%and $\bm \gamma_{\theta}=(1.20,1.00,1.50,1.30)^{\intercal}$

PLA2: $\Xi_{\theta}=(0.00,0.25,0.50,0.75,1.00)^{\intercal}$ &  2.8390 \\
%and $\bm \gamma_{\theta}=(1.20,1.05,1.20,1.50,1.30)^{\intercal}$

PLA3: $\Xi_{\theta}=(0.00,0.30,0.50,0.70,1.00)^{\intercal}$& 0.5702  \\
% and $\bm \gamma_{\theta}=(1.20,1.00,1.25,1.50,1.29)^{\intercal}$

\hline
\end{tabular}\\  
\footnotesize{SSE: Sum of squared errors; $\Xi_{\theta}:$Set of user-defined cut-points}
\end{table}

%%%%%%%%%%%%%%%%%%%%%%%%%%%%%%%%%%
\begin{figure}[!htbp]
\centering
\includegraphics[width=6.3in]{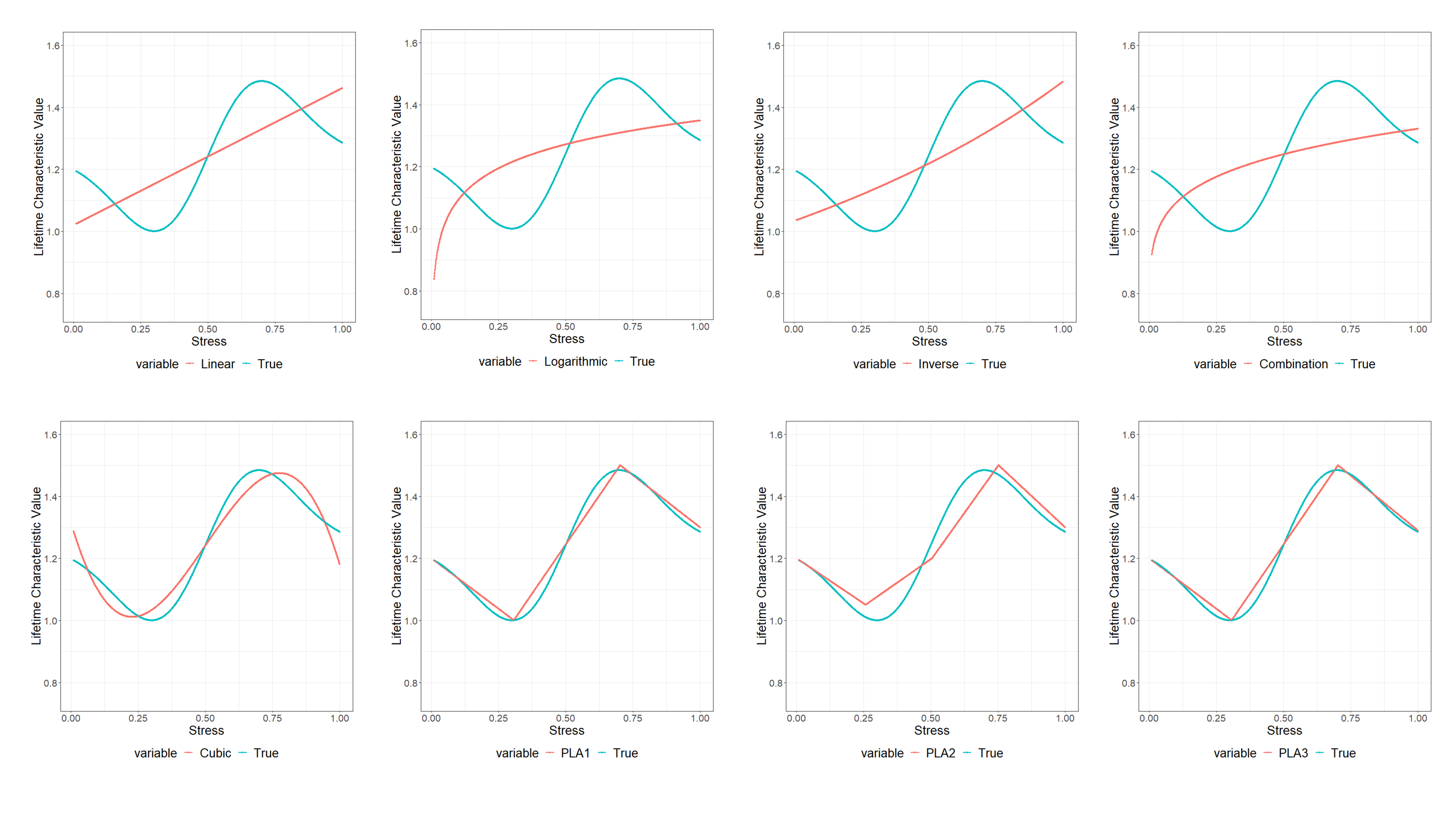}
\caption{A graphical illustration of the goodness of fit of the piecewise linear approximation model}
\label{Fig1}
\end{figure}
%%%%%%%%%%%%%%%%%%%%%%%%%%%%%%%%%%

\subsection{Likelihood Function under Type-I Censoring and Associated Fisher Information Matrix}\label{sec2_3}
For $i=0, \dots, m$, let $n_i$ denote the number of products tested under the stress level $s_i$ and $\tau_0$ denote the pre-determined fixed-time of censoring. For the Type-I censoring scheme (i.e., for the  fixed-time censored data),  the likelihood function under stress $s_i$ can be expressed by
\begin{equation}\label{likeli}
    L_i=L(\bm \theta; s_i) = L(\bm \theta; \xi_i)  \propto  \left(\bar{G}_i(\tau_0; \bm \theta)\right)^{n_i-d_i} {\prod}_{j \in \Delta^*_{i}} 
 g_i(t_{ij}; \bm \theta)
\end{equation}
 where $\bar{G}_i(t; \bm \theta) = 1-G_i(t; \bm \theta)$ is the reliability function with $G_i(t; \bm \theta)$ defined in Equation (\ref{EQ2}), $g_i(t; \bm \theta)$ defined in Equation (\ref{SEV_pdf}), $\Delta^*_{i}$ is the set of all units tested under $s_i$ that are not censored, $d_i=|\Delta^*_i|$ denote the observed number of unit failures and $\bm \theta = (\bm \gamma_{\mu}^\intercal, \bm \gamma_{\sigma}^\intercal)^\intercal$ is the involved model parameter vector. The quantity $\prod_{{i=1}}^m L_i$ or $\sum_{i=0}^m \ln L_i$ is maximized with respect to $\bm \theta$ over its parametric space to obtain the maximum likelihood (ML) estimator $\widehat{\bm \theta}=(\widehat{\bm \gamma}_{\mu}^{\intercal},\widehat{\bm \gamma}_{\sigma}^{\intercal})^{\intercal}$ of $\bm \theta=(\bm \gamma_{\mu}^\intercal, \bm \gamma_{\sigma}^\intercal)^\intercal$.

Let us define the following terms for $i=0, \dots, m$, $q, q^*=0, \dots, Q_1$ and $q', q^{+}=0, \dots, Q_2$:
\begin{flalign*}
\mu'_{i,q}=\frac{\partial \mu(\xi_i; \bm \gamma_{\mu})}{\partial \gamma_{\mu, q}}=\sum_{q=1}^{Q_1} \left(\frac{\xi_i-\xi_{\mu,q-1}}{\xi_{\mu,q}-\xi_{\mu,q-1}}\right) I_{[\xi_{\mu,q-1},\xi_{\mu,q}]}(\xi_i), 
\end{flalign*} 
\begin{flalign*}
\sigma'_{i,q'}=\frac{\partial \sigma(\xi_i; \bm \gamma_{\sigma})}{\partial \gamma_{\sigma, q'}}=\sigma(\xi_i,\bm \gamma_{\sigma})\sum_{q'=1}^{Q_2} \left(\frac{\xi_i-\xi_{\sigma,q'-1}}{\xi_{\sigma,q'}-\xi_{\sigma,q'-1}}\right) I_{[\xi_{\sigma,q'-1},\xi_{\sigma,q'}]}(\xi_i), \end{flalign*}
\begin{flalign*}
    A_{i}=A(\bm \gamma_{\mu}, \bm \gamma_{\sigma}; t, \xi_i)=\ln  h_i(t; \bm \theta)=\frac{t-\mu(\xi_i;\bm \gamma_{\mu})}{\sigma(\xi_i; \bm \gamma_{\sigma})} - \ln \sigma(\xi_i; \bm \gamma_{\sigma}),
\end{flalign*}
\begin{flalign*}
   A'_{i;\mu_q}=\frac{\partial A_{i}}{\partial \gamma_{\mu, q}}= -\frac{\mu'_{i,q}}{\sigma(\xi_i; \bm \gamma_{\sigma})},
   A'_{i;\sigma_{q'}}=\frac{\partial A_{i}}{\partial \gamma_{\sigma, q'}}= - \frac{(t-\mu(\xi_i;\bm \gamma_{\mu})) \sigma'_{i,q'}}{\sigma^2(\xi_i; \bm \gamma_{\sigma})} - \frac{\sigma'_{i,q'}}{\sigma(\xi_i; \bm \gamma_{\sigma})},
\end{flalign*}
\begin{flalign*}
    l_{i;\mu_q, \mu_{q^*}}=\int_{-\infty}^{\tau_0}A'_{i; \mu_q} A'_{i; \mu_{q^*}}g_i(t; (\bm \gamma_{\mu}^{\intercal}, \bm \gamma_{\sigma}^{\intercal})^{\intercal})dt, 
\end{flalign*}
\begin{flalign*}
    l_{i;\mu_q, \sigma_{q'}}=\int_{-\infty}^{\tau_0}A'_{i; \mu_q} A'_{i; \sigma_{q'}}g_i(t; (\bm \gamma_{\mu}^{\intercal}, \bm \gamma_{\sigma}^{\intercal})^{\intercal})dt
\end{flalign*}
and 
\begin{flalign*}
    l_{i;\sigma_{q'}, \sigma_{q^{+}}}=\int_{-\infty}^{\tau_0}A'_{i; \sigma_{q'}} A'_{i; \sigma_{q^{+}}}g_i(t; (\bm \gamma_{\mu}^{\intercal}, \bm \gamma_{\sigma}^{\intercal})^{\intercal})dt. 
\end{flalign*}
 
 \begin{theorem}
 The Fisher information matrix for $\bm \theta$ is given by $F(\bm \theta)=n \sum_{i=0}^m \pi_i l_i(\bm \theta)$ where 
 $$l_i(\bm \theta)=\begin{pmatrix}
 ((l_{i;\mu_q, \mu_{q^*}}))^{(Q_1+1) \times (Q_1+1)} & ((l_{i;\mu_q, \sigma_{q'}}))^{(Q_1+1) \times (Q_2+1)}\\
\left(((l_{i;\mu_{q}, \sigma_{q'}}))^{(Q_1+1) \times (Q_2+1)}\right)^{\intercal} & ((l_{i;\sigma_{q'}, \sigma_{q^{+}}}))^{(Q_2+1) \times (Q_2+1)}
\end{pmatrix},    
$$
$\pi_i$ is the proportion of allocation to the stress level $s_i$ and $\bm \theta=(\bm \gamma_{\mu}^{\intercal}, \bm \gamma_{\sigma}^{\intercal})^{\intercal}$.
\end{theorem}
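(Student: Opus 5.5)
Independence across units and stress levels reduces the statement to a single unit. The total log-likelihood is $\sum_{i=0}^m \ln L_i$, and within stress level $s_i$ the $n_i=n\pi_i$ units are iid. The Fisher information is therefore additive: $F(\bm\theta)=\sum_{i=0}^m n_i\,\mathcal{I}_i(\bm\theta)$, where $\mathcal{I}_i$ is the information from one Type-I censored observation $Y=\min(T_{ij},\tau_0)$ together with the failure indicator $\delta=I(T_{ij}\le\tau_0)$. So it suffices to show $\mathcal{I}_i(\bm\theta)=l_i(\bm\theta)$ entrywise.

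To compute the single-unit information, I would write the per-unit log-likelihood in hazard form using (\ref{haz}):
$$\ell=\delta\ln g_i(Y)+(1-\delta)\ln\bar G_i(\tau_0)=\delta A_i(Y)-\int_{-\infty}^{Y}e^{A_i(u)}\,du,$$
since $-\ln\bar G_i(t)=\int_{-\infty}^t h_i(u)\,du=e^{(t-\mu_i)/\sigma_i}$. Viewed this way the data form a one-jump counting process $N(t)=I(Y\le t,\delta=1)$ with at-risk indicator $R(t)=I(Y\ge t)$, and the score for any component $\theta_a$ of $\bm\theta$ is
$$U_a=\int_{-\infty}^{\tau_0}A'_{i;a}(t)\,\{dN(t)-R(t)h_i(t)\,dt\}.$$
Differentiation under the integral is legitimate because $A_i$ is smooth in $\bm\gamma$ for fixed $\xi_i$: within each piece, $\mu_i$ and $\ln\sigma_i$ are linear in $\bm\gamma$, and the partial derivatives $\mu'_{i,q}$ and $\sigma'_{i,q'}$ have the forms given above.

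The key step is the covariance of the scores. The compensated process $M(t)=N(t)-\int^t R h_i$ is a martingale with predictable variation $\int^t R(u)h_i(u)\,du$, and the integrands $A'_{i;a}$ are deterministic. Hence $\mathbb{E}[U_aU_b]=\int_{-\infty}^{\tau_0}A'_{i;a}A'_{i;b}\,\mathbb{E}[R(t)]h_i(t)\,dt$. Since $\mathbb{E}[R(t)]=\bar G_i(t)$ for $t\le\tau_0$, we have $\bar G_i h_i=g_i$, and the expression becomes exactly $\int_{-\infty}^{\tau_0}A'_{i;a}A'_{i;b}\,g_i\,dt$, which is $l_{i;\mu_q,\mu_{q^*}}$, $l_{i;\mu_q,\sigma_{q'}}$ or $l_{i;\sigma_{q'},\sigma_{q^+}}$ according to the block. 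The off-diagonal block is the transpose by symmetry. If one prefers to avoid martingale language, the same identity follows by direct computation: write $U_a=\delta A'_{i;a}(Y)-\int^{Y}A'_{i;a}e^{A_i}\,du$, take expectations over the density $g_i$ on $(-\infty,\tau_0]$ and the atom $\bar G_i(\tau_0)$, and interchange the order of integration (Fubini), using $\int_{-\infty}^{\tau_0}\bar G_i(u)\,\cdots\,du$.

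I expect the main obstacle to be this last cross-term bookkeeping. One must check that the terms $\mathbb{E}[\delta A'_a(Y)\int^Y A'_b e^{A}]$ and the squared compensator term combine to cancel, leaving only the $g_i$-weighted integral. This is where the extreme-value tails matter: $e^{A_i}$ grows only like $e^{t/\sigma_i}$ while $g_i$ decays doubly exponentially, so all integrals are finite and the Fubini interchanges are valid. Summing over $i$ with $n_i=n\pi_i$ then gives $F(\bm\theta)=n\sum_i\pi_i l_i(\bm\theta)$.
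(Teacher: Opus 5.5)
Your proof is correct and has the same skeleton as the paper's. Both use two ingredients. The first is additivity of information over independent units with $n_i=n\pi_i$. The second is the hazard-form identity that one Type-I censored observation at level $s_i$ contributes $\int_{-\infty}^{\tau_0}(\partial_{\bm\theta}A_i)(\partial_{\bm\theta}A_i)^{\intercal}g_i\,dt$, with $A_i=\ln h_i$.

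The difference is that the paper simply cites this identity (Lawless; Gertsbakh; Park et al.) and the summation step (Chakrabarty et al.). You instead derive it. You represent the score as $\int A'_{i;a}\,dM$ for the compensated one-jump counting process. You then read $\mathbb{E}[U_aU_b]$ off the predictable variation, using $\mathbb{E}[R(t)]=\bar G_i(t)$ and $\bar G_ih_i=g_i$. The paper's route is shorter. Yours is self-contained and shows why no separate boundary term at $\tau_0$ appears: censored units enter only through the at-risk indicator.

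The elementary route you flag as the main obstacle is in fact one integration by parts. Put $B_a(t)=\int_{-\infty}^{t}A'_{i;a}h_i\,du$. Then $\frac{d}{dt}\bigl[B_aB_b\bar G_i\bigr]=(A'_{i;a}B_b+A'_{i;b}B_a)g_i-B_aB_b\,g_i$. Integrating over $(-\infty,\tau_0]$ gives $\mathbb{E}[B_a(Y)B_b(Y)]=\mathbb{E}[\delta A'_{i;a}(Y)B_b(Y)]+\mathbb{E}[\delta A'_{i;b}(Y)B_a(Y)]$, which is exactly the cancellation you need.

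Two small corrections.

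First, your tail remark looks at the wrong end. Every integral here stops at $\tau_0$, so the doubly exponential upper tail of $g_i$ is irrelevant. What matters is $t\to-\infty$, where $g_i$ and $h_i$ decay only like $e^{t/\sigma_i}$. That still dominates the at most quadratic growth of $A'_{i;a}A'_{i;b}$, and it also gives $B_a(t)\to 0$, which kills the boundary term above.

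Second, differentiability in $\bm\gamma$ should rest on the fact that $\mu_i$ and $\ln\sigma_i$ are linear in $\bm\gamma$ for fixed $\xi_i$. It should not rest on the displayed formula for $\mu'_{i,q}$. That formula keeps only the piece on $[\xi_{\mu,q-1},\xi_{\mu,q}]$ and omits the contribution $(\xi_{\mu,q+1}-\xi_i)/(\xi_{\mu,q+1}-\xi_{\mu,q})$ from the adjacent interval. Your argument holds for the true partial derivatives, whatever their explicit form.
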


\begin{proof}
The contribution of $L_i$ towards the Fisher information matrix $F(\bm \theta)$ for Type-I censored data is given by  
 \begin{equation}
    \int_{-\infty}^{\tau_0} \left(\frac{\partial \ln h_i(t; \bm \theta)}{\partial \bm \theta} \right) \left(\frac{\partial \ln h_i(t; \bm \theta)}{\partial \bm \theta} \right)^{\intercal}  g_i(t; \bm \theta) dt;
 \end{equation} see \cite{lawless2011statistical, gertsbakh1995fisher, park2009simple}. Using basic algebra on Equation (\ref{haz}), it can be shown that
 \begin{equation*}
 \ln  h_i(t; \bm \theta)=\frac{t-\mu(\xi_i;\bm \gamma_{\mu})}{\sigma(\xi_i; \bm \gamma_{\sigma})}-\ln \sigma(\xi_i; \bm \gamma_{\sigma}) =A(\bm \gamma_{\mu}, \bm \gamma_{\sigma}; t, \xi_i)=A_{i}.   
 \end{equation*}
To conclude the proof, note that  $l_{i;\mu_q, \mu_{q^*}}$, $l_{i;\mu_q, \sigma_{q'}}$ and $l_{i;\sigma_{q'}, \sigma_{q^{+}}}$ are obtained from $A_i=\ln h_i(t; \bm \theta)$ by taking partial double-derivatives with respect to $\gamma_{\mu, q}$ and $\gamma_{\sigma, q'}$, and then taking $F(\bm \theta)=n \sum_{i=0}^m \pi_i l_i(\bm \theta)$ (\citealp{chakrabarty2022economic}).  
\end{proof}

 \begin{result}
 The associated covariance matrix can be approximated by  $F^{-1}(\bm \theta)=(F(\bm \theta))^{-1}$ under mild regularization conditions. $F^{-1}(\bm \theta)$ is a matrix of order $(Q_1+Q_2+2) \times (Q_1+Q_2+2)$. Let us denote \begin{equation} \label{Finv}
F^{-1}(\bm \theta)=\begin{pmatrix}
 H^{11} & H^{12}\\
 H^{21} & H^{22}
\end{pmatrix},    
\end{equation} 
where $H^{11}, H^{12}=H^{21 \intercal}$ and $H^{22}$ are, respectively, the covariance submatrices corresponding to only $\bm \gamma_{\mu}$ (order $\overline{Q_1+1} \times \overline{Q_1+1}$), $\bm \gamma_{\mu}$ and $\bm \gamma_{\sigma}$ (order $\overline{Q_1+1} \times \overline{Q_2+1}$) and  only $\bm \gamma_{\sigma}$ (order $\overline{Q_2+1} \times \overline{Q_2+1}$). 
\end{result}

%%%%%%%%%%%%%%%%%%%%%%%%%%%%%%%%%%%%%%%%%%%%%%%%%%%%%%%%%%%%%%%%
%%%%%%%%%%%%%%%%%%%%%%%%%%%%%%%%%%%%%%%%%%%%%%%%%%%%%%%%%%%%%%%%
%%%%%%%%%%%%%%%%%%%%%%%%%%%%%%%%%%%%%%%%%%%%%%%%%%%%%%%%%%%%%%%%
%%%%%%%%%%%%%%%%%%%%%%%%%%%%%%%%%%%%%%%%%%%%%%%%%%%%%%%%%%%%%%%%
\section{Lot Acceptance Criterion and Asymptotic Distribution} \label{sec3}
A greater product lifetime is desirable when the quality characteristics are taken based on product lifetime. A lower specification limit ($l_s$) is often considered to ensure a basic or minimum level of product lifetime, hence, the product quality. The probability of an item being conforming or acceptable can be shown as $P(T_{ij}=\ln X_{ij} > \ln l_s)$. Let 
\begin{equation}\label{PLAdefmu0}
    \widehat{\mu_0}=\mu(\xi_0; \widehat{\bm \gamma}_{\mu})=\sum_{q=1}^{Q_1}\left[\widehat{\gamma}_{\mu, q}+ \left\{\frac{\widehat{\gamma}_{\mu, q}-\widehat{\gamma}_{\mu, q-1}}{\xi_{\mu,q}-\xi_{\mu, q-1}}\right\} (\xi_0-\xi_{\mu, q}) \right] I_{[\xi_{\mu,q-1},\xi{\mu,q}]}(\xi_0)
\end{equation} and 

\begin{equation}\label{PLAdefsig0}
    \widehat{\sigma_0}= \sigma(\xi_0; \widehat{\bm \gamma}_{\sigma})=\exp\left\{\sum_{q'=1}^{Q_2}\left[\widehat{\gamma}_{\sigma, q'}+ \left\{\frac{\widehat{\gamma}_{\sigma, q'}-\widehat{\gamma}_{\sigma, q'-1}}{\xi_{\sigma,q'}-\xi_{\sigma,q'-1}}\right\} (\xi_0-\xi_{\sigma,q'}) \right] I_{[\xi_{\sigma, q'-1},\xi_{\sigma, q'}]}(\xi_0)\right\}
\end{equation}
be the estimated values of product lifetime characteristics at the usage condition stress level $s_0$ or standardized stress level $\xi_0$ where $\widehat{.}$ indicates the ML estimate obtained by maximizing Equation (\ref{likeli}) with respect to $\bm \theta$. Let $W=\widehat{\mu_0} - k  \widehat{\sigma_0}$ with $k$ being the constant of acceptability. By \cite{lieberman1955sampling},
\begin{equation}
     W=\widehat{\mu_0} - k  \widehat{\sigma_0} > \ln l_s
\end{equation}
is considered the criterion based on which a lot of items can be accepted. Therefore, it is imperative to study the asymptotic distribution of $W$ to find out $k$ and to develop a sampling plan. The asymptotic distribution of $W$ is normal by the properties of the ML estimator.  

%%%%%%%%%%%%%%%%%%%%%%%%%%%%%%%%%%%%%%%%%%%%%%%%%%%%%%%%%%%%%%
%%%%%%%%%%%%%%%%%%%%%%%%%%%%%%%%%%%%%%%%%%%%%%%%%%%%%%%%%%%%%%
%%%%%%%%%%%%%%%%%%%%%%%%%%%%%%%%%%%%%%%%%%%%%%%%%%%%%%%%%%%%%%
%%%%%%%%%%%%%%%%%%%%%%%%%%%%%%%%%%%%%%%%%%%%%%%%%%%%%%%%%%%%%%
\subsection{Asymptotic Mean and Variance of $W$} \label{sec3_1}
To derive expressions for the asymptotic mean and variance of $W$, let us define the following terms for $q=0, \dots, Q_1$ and $q'=0, \dots,Q_2$: 
$$\xi'_{0,q}=\frac{\xi_0-\xi_{\mu, q}}{\xi_{\mu, q}-\xi_{\mu, q-1}}, \xi''_{0,q'}=\frac{\xi_0-\xi_{\sigma, q'}}{\xi_{\sigma, q'}-\xi_{\sigma, q'-1}}, \Omega({q-1},{q},q'-1,q')=[\xi_{\mu, q-1},\xi_{\mu, q}] \cap [\xi_{\sigma, q'-1},\xi_{\sigma, q'}],$$ 
$$g_{q,q'}(x_1, x_2, y_1, y_2)=\left(\left\{(1+\xi'_{0,q}) x_1 - \xi'_{0,q} x_2\right\} \exp\left\{(1+\xi''_{0,q'}) y_1 - \xi''_{0,q'} y_2\right\} \right),$$ for $(x_1, x_2, y_1, y_2)^{\intercal} \in \mathbb{R}^4$,
$\bm \gamma(q,q')=(\gamma_{\mu, q-1}, \gamma_{\mu, q}, \gamma_{\sigma, q'-1}, \gamma_{\sigma, q'})^{\intercal}$
and
\begin{equation*}
\tilde{H}(q,q')=
\begin{pmatrix}
 H^{11}_{q-1,q-1} & H^{11}_{q-1,q} & H^{12}_{q-1,q'-1} & H^{12}_{q-1,q'} \\
 H^{11}_{q-1,q} & H^{11}_{q,q} & H^{12}_{q,q'-1} & H^{12}_{q,q'} \\
 H^{12}_{q-1,q'-1} & H^{12}_{q,q'-1} & H^{22}_{q'-1,q'-1} & H^{22}_{q'-1,q'} \\
 H^{12}_{q-1,q'} & H^{12}_{q,q'}  & H^{22}_{q'-1,q'} & H^{22}_{q',q'} \\
\end{pmatrix}.    
\end{equation*}  
Let $f_N(.; \bm \gamma(q,q'), \tilde{H}(q,q'))$ be the pdf of four-dimensional multivariate normal distribution with mean vector $\bm \gamma(q,q')$ and covariance matrix $ \tilde{H}(q,q')$ and $\tilde{H}_{q,q'}$ denote $(q,q')$-th element of any matrix $\tilde{H}$.  

\begin{theorem}\label{varw} The asymptotic expectation and variance of $W=\widehat{\mu_0}-k\widehat{\sigma_0}$ are given by
\begin{flalign*}
    &\mathbb{E}(W)=\mu_0-k\sigma_0 \\ 
    &\mathbb{V}(W)=\sigma^2_{\mu_0}+k^2\sigma^2_{\sigma_0}-2k \sigma_{\mu_0, \sigma_0}
\end{flalign*}
where 
\begin{equation*}
    \sigma^2_{\mu_0}=\sum_{q=1}^{Q_1}\left[ H^{11}_{q,q} (1+\xi'_{0,q})^2 + H^{11}_{q-1,q-1} \xi^{'2}_{0,q} - 2 \xi'_{0,q} (1+\xi'_{0,q}) H^{11}_{q-1,q}\right] I_{[\xi_{\mu,q-1}, \xi{\mu,q}]}(\xi_0), 
\end{equation*}
\begin{equation*}
    \sigma^2_{\sigma_0}=\sigma_0^2\sum_{q'=1}^{Q_2}\left[ H^{22}_{q',q'} (1+\xi''_{0,q'})^2 + H^{22}_{q'-1,q'-1} \xi^{''2}_{0,q'} - 2 \xi''_{0,q'} (1+\xi''_{0,q'}) H^{22}_{q'-1,q'}\right] I_{[\xi_{\sigma,q'-1}, \xi_{\sigma,q'}]}(\xi_0), 
\end{equation*}
and 
\begin{flalign*}
    &\sigma_{\mu_0, \sigma_0}=\sum_{q=1}^{Q_1}\sum_{q'=1}^{Q_2}\left[ \iint\limits_{\mathbb{R}^4} g_{q,q'}(x_1, x_2, y_1, y_2) f_N((x_1,x_2,y_1,y_2)^{\intercal}; \bm \gamma(q,q'),\tilde{H}(q,q')) dx_1 dx_2 dy_1 dy_2 \right]\nonumber \\
    &\times I_{\Omega({q-1}, {q}, q'-1, q')}(\xi_0) - \mu_0\sigma_0.
\end{flalign*} 
\end{theorem}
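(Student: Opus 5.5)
The plan is to reduce everything to the joint asymptotic normality of the ML estimator, $\widehat{\bm\theta}\sim N(\bm\theta, F^{-1}(\bm\theta))$, with the covariance blocks $H^{11},H^{12},H^{22}$ of Equation (\ref{Finv}) from the Result following the Fisher information theorem. Since $\xi_0$ is fixed and the cut-points are user-defined, only the terms whose indicators are nonzero survive. Then $\widehat{\mu_0}$ is a linear combination of two adjacent coordinates of $\widehat{\bm\gamma}_\mu$. Likewise $\ln\widehat{\sigma_0}$ is a linear combination of two adjacent coordinates of $\widehat{\bm\gamma}_\sigma$. Rewriting Equation (\ref{PLAdefmu0}) with the notation $\xi'_{0,q}$ gives, on the active interval,
$$\widehat{\mu_0}=(1+\xi'_{0,q})\widehat{\gamma}_{\mu,q}-\xi'_{0,q}\widehat{\gamma}_{\mu,q-1}.$$
Similarly, Equation (\ref{PLAdefsig0}) gives
$$\widehat{\sigma_0}=\exp\{(1+\xi''_{0,q'})\widehat{\gamma}_{\sigma,q'}-\xi''_{0,q'}\widehat{\gamma}_{\sigma,q'-1}\}.$$
So $\widehat{\mu_0}\widehat{\sigma_0}=g_{q,q'}(\widehat{\gamma}_{\mu,q},\widehat{\gamma}_{\mu,q-1},\widehat{\gamma}_{\sigma,q'},\widehat{\gamma}_{\sigma,q'-1})$. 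I will need to match the argument order to the ordering of $\bm\gamma(q,q')$ when writing the final integral.

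\textbf{Expectation.} Asymptotic unbiasedness of $\widehat{\bm\theta}$ and linearity give $\mathbb E(\widehat{\mu_0})=\mu_0$ exactly in the limit. For $\widehat{\sigma_0}$, I will invoke consistency together with the delta method, so that to first order $\mathbb E(\widehat{\sigma_0})=\sigma_0$. This yields $\mathbb E(W)=\mu_0-k\sigma_0$.

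\textbf{Variances.} Write $\mathbb V(W)=\mathbb V(\widehat{\mu_0})+k^2\mathbb V(\widehat{\sigma_0})-2k\,\mathrm{Cov}(\widehat{\mu_0},\widehat{\sigma_0})$.
\begin{itemize}
\item The variance of the linear combination defining $\widehat{\mu_0}$ is immediate from the $2\times2$ sub-block of $H^{11}$ indexed by $(q-1,q)$. This gives the stated $\sigma^2_{\mu_0}$, with the cross term $-2\xi'_{0,q}(1+\xi'_{0,q})H^{11}_{q-1,q}$. Summing over $q$ with the indicator selects the active interval.
\item For $\widehat{\sigma_0}$ I apply the delta method to $\exp(\cdot)$. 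The gradient contributes the factor $\sigma_0$ (squared, $\sigma_0^2$), multiplying the analogous quadratic form in $H^{22}$. This gives $\sigma^2_{\sigma_0}$.
\end{itemize}

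\textbf{Covariance.} I will not linearize the cross term; I will use $\mathrm{Cov}(\widehat{\mu_0},\widehat{\sigma_0})=\mathbb E(\widehat{\mu_0}\widehat{\sigma_0})-\mathbb E\widehat{\mu_0}\,\mathbb E\widehat{\sigma_0}$. The second product is $\mu_0\sigma_0$ by the expectation step. For the first, the four relevant estimators $(\widehat{\gamma}_{\mu,q-1},\widehat{\gamma}_{\mu,q},\widehat{\gamma}_{\sigma,q'-1},\widehat{\gamma}_{\sigma,q'})$ are asymptotically jointly normal. Their mean is $\bm\gamma(q,q')$, and their covariance is the submatrix $\tilde H(q,q')$ extracted from $F^{-1}$ via $H^{11},H^{12},H^{22}$. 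Hence $\mathbb E(\widehat{\mu_0}\widehat{\sigma_0})$ equals the integral of $g_{q,q'}$ against $f_N(\cdot;\bm\gamma(q,q'),\tilde H(q,q'))$. The double sum with $I_{\Omega(q-1,q,q'-1,q')}(\xi_0)$ picks out the unique pair of intervals containing $\xi_0$ for both links.

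\textbf{Main obstacle.} I expect the hardest step to be justifying the mixture of exact and first-order arguments consistently. The expectation of $\widehat{\sigma_0}$ is taken as $\sigma_0$ by the delta method, while the exact lognormal-type moment enters only in the covariance integral. I would state explicitly that all moments are asymptotic, to order $n^{-1}$, and that the integral could even be evaluated in closed form via Stein's lemma. The secondary nuisance is the boundary case where $\xi_0$ lies exactly at a cut-point. There two indicators fire, but continuity at the knots makes the contributions consistent, so I would fix the convention of half-open intervals.
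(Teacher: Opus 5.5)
Your argument follows the paper's proof step for step:
\begin{itemize}
\item asymptotic normality of $\widehat{\bm\theta}$ with covariance $F^{-1}(\bm\theta)$;
\item the linear rewriting of $\widehat{\mu_0}$ and $\widehat{\ln\sigma_0}$ on the active interval;
\item the delta method for $\widehat{\sigma_0}$;
\item $\mathrm{Cov}(\widehat{\mu_0},\widehat{\sigma_0})=\mathbb{E}(\widehat{\mu_0}\widehat{\sigma_0})-\mu_0\sigma_0$, with the first term written as a four-dimensional Gaussian integral.
\end{itemize}
The expectation and the two variance formulas are fine. Your worry about argument order is also justified. In $g_{q,q'}$ the slot $x_1$ carries the weight $1+\xi'_{0,q}$, so it must receive $\widehat{\gamma}_{\mu,q}$. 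But $f_N(\cdot;\bm\gamma(q,q'),\tilde H(q,q'))$ centres $x_1$ at $\gamma_{\mu,q-1}$, and likewise for $y_1,y_2$. The paper never permutes.

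The genuine gap is the step you flagged yourself, and your proposed remedy does not close it. Write $c=\mathrm{Cov}(\widehat{\mu_0},\widehat{\ln\sigma_0})$ and $v=\mathbb{V}(\widehat{\ln\sigma_0})=\sigma^2_{\sigma_0}/\sigma_0^2$. Both are $O(n^{-1})$, since $F^{-1}(\bm\theta)=n^{-1}(\sum_i\pi_i l_i)^{-1}$. Evaluating the Gaussian integral, as your Stein's-lemma remark suggests, gives $\mathbb{E}(\widehat{\mu_0}\widehat{\sigma_0})=(\mu_0+c)\,\sigma_0e^{v/2}$. Hence
\[
\sigma_{\mu_0,\sigma_0}=c\,\sigma_0e^{v/2}+\mu_0\sigma_0\bigl(e^{v/2}-1\bigr)=\sigma_0c+\tfrac12\mu_0\sigma_0v+O(n^{-2}).
\]
Compare this with the two consistent answers:
\begin{itemize}
\item the delta-method covariance is $\sigma_0c$;
\item the exact Gaussian covariance $\mathbb{E}(\widehat{\mu_0}\widehat{\sigma_0})-\mathbb{E}(\widehat{\mu_0})\,\mathbb{E}(\widehat{\sigma_0})$ is $c\,\sigma_0e^{v/2}$.
\end{itemize}
The surplus $\tfrac12\mu_0\sigma_0v$ is of the same order $n^{-1}$ as the covariance itself. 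So ``all moments to order $n^{-1}$'' is precisely the regime in which you may not replace $\mathbb{E}(\widehat{\sigma_0})=\sigma_0e^{v/2}$ by $\sigma_0$ and then multiply by $\mu_0$.

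A sanity check confirms this. Changing the time unit shifts every $\gamma_{\mu,q}$, hence $\mu_0$, by a constant. With $\tau_0$ rescaled accordingly, it leaves $F^{-1}$, $c$ and $v$ unchanged. A true covariance of $(\widehat{\mu_0},\widehat{\sigma_0})$ is invariant under this change; the displayed expression is not.

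The paper's proof makes the identical move without comment. So neither argument actually establishes the stated $\sigma_{\mu_0,\sigma_0}$, and hence the stated $\mathbb{V}(W)$. A consistent argument gives $\sigma_{\mu_0,\sigma_0}=\sigma_0c$, with $c$ the corresponding bilinear form in $H^{12}$. Alternatively, if you insist on exact Gaussian moments, you get $c\,\sigma_0e^{v/2}$ together with $\mathbb{E}(W)=\mu_0-k\sigma_0e^{v/2}$. Tellingly, a genuine variance $\mathbb{V}(\widehat{\mu_0})+k^2\mathbb{V}(\widehat{\sigma_0})-2k\,\mathrm{Cov}(\widehat{\mu_0},\widehat{\sigma_0})$ can never be negative, yet the paper's algorithms need a guard for exactly that case.
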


\begin{proof}
First, note that $\widehat{\bm \theta}=(\widehat{\bm \gamma}_{\mu}^{\intercal}, \widehat{\bm \gamma}_{\sigma}^{\intercal})^{\intercal}$ asymptotically follows a multivariate normal distribution with mean vector $\bm \theta$ and covariance matrix $F^{-1}(\bm \theta)$ by the asymptotic properties of the ML estimator. This also means that any sub-vector of $\widehat {\bm \theta}$ follows multivariate normal (or normal if the sub-vector has only one element) distribution asymptotically with respective mean vector (or value) and covariance matrix (or variance) obtained from $F^{-1}(\bm \theta)$. \\

Second, Equations (\ref{PLAdefmu0}) and (\ref{PLAdefsig0}) can be re-written as 
\begin{equation*}
    \widehat{\mu_0}=\sum_{q=1}^{Q_1}\left[(1+\xi'_{0,q})\widehat{\gamma}_{\mu, q}  - \xi'_{0,q} \widehat{\gamma}_{\mu, q-1}\right] I_{[\xi_{\mu,q-1},\xi{\mu,q}]}(\xi_0)
\end{equation*} and 

\begin{equation*}
    \widehat{\sigma_0}=\sum_{q'=1}^{Q_2}\left[(1+\xi^{''}_{0,q'})\widehat{\gamma}_{\sigma, q'}  - \xi^{''}_{0,q'} \widehat{\gamma}_{\sigma, q'-1}\right] I_{[\xi_{\sigma,q'-1},\xi{\sigma,q'}]}(\xi_0).
\end{equation*}

Therefore, 
\begin{equation}\label{PLAmuexp}
    \mathbb{E}({\widehat{\mu_0}})=\left[(1+\xi'_{0,q})\mathbb{E} (\widehat{\gamma}_{\mu, q})  - \xi'_{0,q} \mathbb{E}(\widehat{\gamma}_{\mu, q-1})\right] I_{[\xi_{\mu,q-1},\xi{\mu,q}]}(\xi_0) = \mu_0, 
\end{equation}
\begin{equation}\label{PLAmuvar}
    \mathbb{V}({\widehat{\mu_0}})=\sum_{q=1}^{Q_1}\left[ H^{11}_{q,q} (1+\xi'_{0,q})^2 + H^{11}_{q-1,q-1} \xi^{'2}_{0,q} - 2 \xi'_{0,q} (1+\xi'_{0,q}) H^{11}_{q-1,q}\right] I_{[\xi_{\mu,q-1},\xi{\mu,q}]}(\xi_0), 
\end{equation}

\begin{equation}\label{PLAlnsigexp}
    \mathbb{E}({\widehat{\ln \sigma_0}})=\sum_{q'=1}^{Q_2}\left[(1+\xi^{''}_{0,q'})\mathbb{E}(\widehat{\gamma}_{\sigma, q'})  - \xi^{''}_{0,q'} \mathbb{E}(\widehat{\gamma}_{\sigma, q'-1})\right] I_{[\xi_{\sigma,q'-1},\xi{\sigma,q'}]}(\xi_0) = \ln \sigma_0 
\end{equation}
and 
\begin{equation}\label{PLAlnsigvar}
    \mathbb{V}(\widehat{\ln \sigma_0})=\sum_{q'=1}^{Q_2}\left[ H^{22}_{q',q'} (1+\xi''_{0,q'})^2 + H^{22}_{q'-1,q'-1} \xi^{''2}_{0,q'} - 2 \xi''_{0,q'} (1+\xi''_{0,q'}) H^{22}_{q'-1,q'}\right] I_{[\xi_{\sigma,q'-1},\xi{\sigma,q'}]}(\xi_0). 
\end{equation}
By the invariance property of the ML estimator and using Equations (\ref{PLAlnsigexp}) and (\ref{PLAlnsigvar}), we get $$\widehat{\sigma_0}= \exp\{\widehat{\ln \sigma}_0\}$$
as the ML estimator for $\exp\{\ln \sigma_0\}=\sigma_0$. Further, $\widehat{\sigma_0}$ follows a normal distribution asymptotically with mean $\sigma_0$ and variance $\mathbb{V}({\widehat{\sigma_0}})=\sigma_0^2 \mathbb{V}({\widehat{\ln \sigma_0}})$ by following the delta method; see \cite{casella2024statistical}. \\

Being the ML estimator, $(\widehat \gamma_{\mu, q-1},\widehat \gamma_{\mu, q},\widehat \gamma_{\sigma, q'-1},\widehat \gamma_{\sigma, q'})^{\intercal}$ follows a four-dimensional multivariate normal distribution with pdf $f_N(.; \bm \gamma(q,q'), \tilde{H}(q,q'))$. Thus, 
\begin{flalign}\label{expct}
    &\mathbb{E}(\widehat{\mu_0} \widehat{\sigma_0}) \nonumber\\
    &=\sum_{q=1}^{Q_1}\sum_{q'=1}^{Q_2} \mathbb{E}\left(\left\{(1+\xi'_{0,q}) \widehat{\gamma}_{\mu,q} - \xi'_{0,q} \widehat{\gamma}_{\mu,q-1}\right\} \exp\left\{(1+\xi''_{0,q'}) \widehat{\gamma}_{\sigma,q'} - \xi''_{0,q'} \widehat{\gamma}_{\sigma,q'-1}\right\} \right)\nonumber \\
    & \times I_{\Omega({q-1},{q},q'-1,q')}(\xi_0)\nonumber \\
    &=\sum_{q=1}^{Q_1}\sum_{q'=1}^{Q_2}\left[ \iint\limits_{\mathbb{R}^4} g_{q,q'}(x_1, x_2, y_1, y_2) f_N((x_1,x_2,y_1,y_2)^{\intercal}; \bm \gamma(q,q'),\tilde{H}(q,q')) dx_1 dx_2 dy_1 dy_2 \right]\nonumber \\
    &\times I_{\Omega({q-1},{q},q'-1,q')}(\xi_0).
\end{flalign}
Hence, from Equation (\ref{expct}), $\sigma_{\mu_0, \sigma_0}=\text{COV}(\widehat{\mu_0},\widehat{\sigma_0})=\mathbb{E}(\widehat{\mu_0} \widehat{\sigma_0})-\mu_0\sigma_0$ where COV(., .) denotes the covariance function. Therefore, this completes the proof if we consider $\sigma^2_{\mu_0}=\mathbb{V}(\widehat{\mu_0})$ and $\sigma^2_{\sigma_0}=\mathbb{V}(\widehat{\sigma_0})$.
\end{proof}
%%%%%%%%%%%%%%%%%%%%%%%%%%%%%%%%%%%%%%%%%%%%%%%%%
%%%%%%%%%%%%%%%%%%%%%%%%%%%%%%%%%%%%%%%%%%%%%%%%%
%%%%%%%%%%%%%%%%%%%%%%%%%%%%%%%%%%%%%%%%%%%%%%%%%
%%%%%%%%%%%%%%%%%%%%%%%%%%%%%%%%%%%%%%%%%%%%%%%%%

\subsection{Acceptability Constant and Optimization Constraint}\label{Sec3_2}
Let 
\begin{equation} \label{teststatstd}
    Z=\frac{W-(\mu_0 - k \sigma_0)}{\sqrt{\sigma^2_{\mu_0}+k^2\sigma^2_{\sigma_0}-2k \sigma_{\mu_0, \sigma_0}}}.
\end{equation}      
Therefore, $Z$ follows normal distribution asymptotically with mean $0$ and variance $1$. If $p_{nc}=P\left(T_{ij} \le {\ln l_s }\right)$ denotes the proportion of defectives or non-conforming items in a lot, then the OC function can be obtained as  
\begin{flalign}\label{Lpcurve}
L(p_{nc})&=P(W> \ln l_s|p_{nc})\nonumber \\
& = P\left(\frac{W-(\mu_0 - k \sigma_0)}{\sqrt{\sigma^2_{\mu_0}+k^2\sigma^2_{\sigma_0}-2k \sigma_{\mu_0, \sigma_0}}}>\frac{\ln l_s-(\mu_0-k \sigma_0)}{\sqrt{\sigma^2_{\mu_0}+k^2\sigma^2_{\sigma_0}-2k \sigma_{\mu_0, \sigma_0}}} \Bigg \vert p_{nc}\right) \nonumber \\
&=P\left(Z>\frac{\left[\frac{(\ln l_s-\mu_0)}{\sigma_0}\right] \sigma_0 + k\sigma_0}{\sqrt{\sigma^2_{\mu_0}+k^2\sigma^2_{\sigma_0}-2k \sigma_{\mu_0, \sigma_0}}}\Bigg \vert p_{nc}\right ) \nonumber \\
&=P\left(Z > \frac{({u_p}_{nc}+k)\sigma_0}{\sqrt{\sigma^2_{\mu_0}+k^2\sigma^2_{\sigma_0}-2k \sigma_{\mu_0, \sigma_0}}}\right)=1 - \Phi\left( \frac{({u_p}_{nc}+k)\sigma_0}{\sqrt{\sigma^2_{\mu_0}+k^2\sigma^2_{\sigma_0}-2k \sigma_{\mu_0, \sigma_0}}}\right)
\end{flalign} 
where ${u_p}_{nc}=\frac{\ln l_s-\mu_0}{\sigma_0}$ is the $p_{nc}$-th quantile of the standard extreme-valued (SEV) distribution where it has cdf of the form $G_s(t)=1-\exp\{-e^t\}, -\infty < t < \infty$. For $\alpha \in [0,1]$ and $\beta \in [0,1]$, let $p_{\alpha}$ and $p_{\beta}$ be the proportions of non-conforming or defective items in a lot that satisfy the producer's and consumer's risks, respectively. That is, the sampling plan must comply to the following inequalities:
\begin{equation} \label{prodrisk}
    P(W> \ln l_s| \text{proportion of non-conforming} \le p_{\alpha}) \ge 1 - \alpha
\end{equation}
and 
\begin{equation} \label{consrisk}
    P(W> \ln l_s| \text{proportion of non-conforming} \ge p_{\beta}) \le \beta.
\end{equation}
Using strict equations in the inequalities, (\ref{prodrisk}) and (\ref{consrisk}) yield 
\begin{equation}\label{zprod}
    \frac{(u_{p_{\alpha}}+k)\sigma_0}{\sqrt{\sigma^2_{\mu_0}+k^2\sigma^2_{\sigma_0}-2k \sigma_{\mu_0, \sigma_0}}}=z_{\alpha}
\end{equation}
and 
\begin{equation}\label{zcons}
    \frac{(u_{p_{\beta}}+k)\sigma_0}{\sqrt{\sigma^2_{\mu_0}+k^2\sigma^2_{\sigma_0}-2k \sigma_{\mu_0, \sigma_0}}}=z_{1-\beta}
\end{equation}
where $u_{p_{\alpha}}$ and $u_{p_{\beta}}$ are the $\alpha$-th and $\beta$-th quantiles of the SEV distribution, and $z_{\alpha}$ and $z_{1-\beta}$ are the $\alpha$-th and $(1-\beta)$-th quantiles of the standard normal distribution. Solving Equations (\ref{zprod}) and (\ref{zcons}), we obtain
\begin{equation}\label{c-const}
k=\frac{u_{p_{\alpha}}z_{1-\beta} - u_{p_{\beta}}z_{\alpha}}{z_{\alpha}-z_{1-\beta}}.    
\end{equation}
Fixing the points $(p_{\alpha}, 1- \alpha)$ and $(p_{\beta}, \beta)$ in the OC function, the following equality is obtained 
\begin{equation}\label{vcons}
 \frac{{\sigma^2_{\mu_0}+k^2\sigma^2_{\sigma_0}-2k \sigma_{\mu_0, \sigma_0}}}{\sigma_0^2} \left(\frac{z_{\alpha}-z_{1-\beta}}{u_{p_{\alpha}} - u_{p_{\beta}}}\right)^2=1.
\end{equation}
Equation (\ref{vcons}) confirms the risks that both the producers and consumers agree to tolerate and is treated as a constraint for the optimization problems discussed in the next section.
%%%%%%%%%%%%%%%%%%%%%%%%%%%%%%%%%%%%%%%%%%%%%%%%%%%%%%%%%%%%%%%%%%%%%%%
%%%%%%%%%%%%%%%%%%%%%%%%%%%%%%%%%%%%%%%%%%%%%%%%%%%%%%%%%%%%%%%%%%%%%%%
%%%%%%%%%%%%%%%%%%%%%%%%%%%%%%%%%%%%%%%%%%%%%%%%%%%%%%%%%%%%%%%%%%%%%%%
%%%%%%%%%%%%%%%%%%%%%%%%%%%%%%%%%%%%%%%%%%%%%%%%%%%%%%%%%%%%%%%%%%%%%%%
\section{Optimal Sampling Plan}\label{sec4}
The optimal sampling plans under accelerated life testing procedure where lifetime characteristics are modelled through piecewise linear functions are designed based on two approaches: aggregate cost minimization and variance of the aggregate quantile of log-lifetime minimization. Both approaches are elaborately discussed in several papers as mentioned in the Introduction. 

\subsection{Aggregate Cost Minimization} \label{sec4_1}
The first optimization problem is formulated to minimize the aggregated cost of the ALTSP. Following \cite{kwon1996bayesian}, acceptance, rejection, test duration, and inspection costs are taken into account to compute the aggregated cost. In particular, the acceptance cost is interpreted as the cost of the warranty. A general rebate warrant policy is assumed by \cite{kwon1996bayesian} following which an aggregate cost function is defined below as
\begin{flalign}\label{cmfun}
    C_T(n, \bm \pi, \bm \xi, \tau_0) &= (N-n) \left\{\sum_{i=0}^m \omega_i(\bm \theta)+\Phi\left( \frac{({u_p}_{nc}+k)\sigma_0}{\sqrt{\sigma^2_{\mu_0}+k^2\sigma^2_{\sigma_0}-2k \sigma_{\mu_0, \sigma_0}}}\right)\left(c_r-\sum_{i=0}^m \omega_i(\bm \theta)\right)\right\} \nonumber\\
    &+c_t \tau_0+ n c^{*}.
\end{flalign}
Here, $N$ is the lot or batch size, $n= n \sum_{i=0}^m \pi_i=\sum_{i=0}^m n_i$ is the number of items put to test, $\bm \pi=(\pi_0, \dots, \pi_m)^{\intercal}$, $\bm \xi=(\xi_0, \dots, \xi_m)^{\intercal}$, 
$$\omega_i(\bm \theta)=c_a\left\{\frac{\left[\omega_2 G_i(\omega_2)-\omega_1 G_i(\omega_1)\right]-\int_{\ln \omega_1}^{\ln \omega_2}e^t g_i(t)dt}{\omega_2-\omega_1}\right\}$$
with $G_i(.)$ and $g_i(.)$ defined in Equations (\ref{EQ2}) and (\ref{SEV_pdf}), respectively,
$c_r$ is the unit cost of rejection, $c_t$ is the per unit time consumption cost,  
%\begin{flalign*}
%\mathbb{E}(\tau)&=\mathbb{E}(\min\{T_{0j}, \tau_0\})\nonumber \\
%&=\tau_0 P(T_{ij}>\tau_0)+ \mathbb{E}(T_{ij}|T_{ij} \le \tau_0) P(T_{ij} \le \tau_0)\nonumber \\
%&=\tau_0 \overline{G}(\tau_o)+G(\tau_0)\int_{-\infty}^{\tau_0} \frac{tg(t)}{G(\tau_0)}dt %\nonumber \\ &=\tau_0 \overline{G}(\tau_o)+\int_{-\infty}^{\tau_0} tg(t)dt\nonumber \\
%&=\tau_0 \overline{G}(\tau_o)+\mu_0 \left[1- \exp\left\{-e^{\left(\frac{\tau_0-\mu_0}%{\sigma_0}\right)}\right\}\right] + \sigma_0\left[\psi^*\left(e^{\left(\frac{\tau_0-\mu_0}%{\sigma_0}\right)}\right)-\psi^*(0)\right]
%\end{flalign*}
$c^*$ is the unit cost of inspection and $\Phi(.)$ is the cdf of the standard normal distribution. %and $\psi^*(x)=\int_0^{x} (\ln{m}) e^{-m} dm$. 
Note that if the product under rebate warranty fails before $\omega_1$, free replacement is given with unit cost $c_a$, a pro-rata based cost is incurred if the product fails in $(\omega_1, \omega_2)$ and no cost of replacement is incurred if the product fails after $\omega_2$. Consequently, a relevant optimal design problem is given by 
\begin{flalign*}
   & \text{minimize } C_T(n, \bm \pi, \bm \xi, \tau_0) \nonumber \\
     & \text{subject to } \left[\frac{ {\sigma^2_{\mu_0}+k^2\sigma^2_{\sigma_0}-2k \sigma_{\mu_0, \sigma_0}}}{\sigma_0^2} \left(\frac{z_{\alpha}-z_{1-\beta}}{u_{p_{\alpha}} - u_{p_{\beta}}}\right)^2-1\right]=0, \sum_{i=0}^m \pi_i=1, \xi_0=0 \text{ and } \xi_m=1.
\end{flalign*}
The detailed algorithm for the constrained cost minimization problem is shown below.
\begin{algorithm}
\caption{Algorithm for aggregate cost minimization}\label{alg1}
%\par\noindent\rule{\textwidth}{0.4pt}
{\bf Given fixed quantities}: $\xi_0=0, \xi_m=1, \pi_0=0.20$, $\omega_1=0.50, \omega_2=0.75, c_a=0.15, c_r=0.80, c^*=0.05$ and $c_t=0.08$.\\

{\bf Input: }$N, \alpha, \beta, p_{\alpha}, p_{\beta}, \Xi_{\mu}=(\xi_{\mu, 0}, \dots, \xi_{\mu,Q_1})^{\intercal}, \bm \gamma_{\mu}=(\gamma_{\mu, 0}, \dots, \gamma_{\mu, Q_1})^{\intercal}$,  $\Xi_{\sigma}=(\xi_{\sigma, 0}, \dots, \xi_{\sigma,Q_2})^{\intercal}$ and $\bm \gamma_{\sigma}=(\gamma_{\sigma, 0}, \dots, \gamma_{\sigma, Q_2})^{\intercal}$.\\

{\bf Output: }$n, \xi_1, \dots, \xi_{m-1}, \pi_1, \dots, \pi_{m-1}, \tau_0, \ln{\tau_0}$ and $C_{\min}$.\\

{\bf Steps}:
\begin{itemize}
\item [1.] Provide initial values for $n, \xi_1, \dots, \xi_{m-1}, \pi_1, \dots, \pi_{m-1}$ and $\tau_0$. Note that $\bm \theta=(\bm \gamma_{\mu}^{\intercal}, \bm \gamma_{\sigma}^{\intercal})^{\intercal}$.
\item [2.] Following the steps leading to Equation (\ref{Finv}) and Theorem \ref{varw}, calculate $F^{-1}(\bm \theta)$  and $\sigma^2_{\mu_0}+k^2\sigma^2_{\sigma_0}-2k \sigma_{\mu_0, \sigma_0}$ respectively.
\item [3.] If $\sigma^2_{\mu_0}+k^2\sigma^2_{\sigma_0}-2k \sigma_{\mu_0, \sigma_0} \ge 0$, then define  the objective function $C_T(n, \bm \pi, \bm \xi, \tau_0)$ using Equation (\ref{cmfun}); else return ($10^{12}$). Here, $\bm \pi=(\pi_1, \dots, \pi_m)^{\intercal}$ and $\bm \xi=(\xi_1, \dots, \xi_{m-1})^{\intercal}$.
\item [4.] Define an inequality constraint sub-routine (INEQ) by specifying $0<\xi_1 < \dots < \xi_{m-1}<1$.
\item [5.] Further, Define an equality constraint sub-routine (EQLT) by specifying \\ $\left[\frac{\sigma^2_{\mu_0}+k^2\sigma^2_{\sigma_0}-2k \sigma_{\mu_0, \sigma_0}}{\sigma_0^2} \left(\frac{z_{\alpha}-z_{1-\beta}}{u_{p_{\alpha}} - u_{p_{\beta}}}\right)^2-1\right]=0$, $\sum_{i=0}^m \pi_i=1$, and $\xi_0=0 \text{ and } \xi_m=1$.    
\item [6.] Apply minimization sub-routine on $C_T(n, \bm \pi, \bm \xi, \tau_0)$ subject to INEQ and EQLT mentioned in steps 4 and 5, respectively.  {\tt nloptr} sub-routine within package {\tt NLOPTR} in {\tt R version 4.3.0} could be used. 
\item [7.] $C_{\min}$ (the minimized value of the objective function) along with corresponding parameters mentioned as output is obtained.
\item [8.] Calculate $n_i=\lfloor{n\pi_i}\rfloor$ for $i=1, \dots, m$ and $n_0=n-\sum_{i=1}^m n_i$.
\end{itemize}
%\par\noindent\rule{\textwidth}{0.4pt}
\end{algorithm}

\subsection{Variance of Aggregate Quantile of Log-Lifetime Minimization}\label{sec4_2}
The second optimization problem is formulated following \cite{kundu2008bayesian}. The details are as follows. Let ${T}_{0,b}$ is the $b$-th quantile of the EV log-lifetime distribution with location $\mu_0$ and scale $\sigma_0$ (true parameters at usage level) for $b \in [0,1]$. From Equation (\ref{EQ2}), $$1-\exp\left[-e^{\left(\frac{T_{0,b}-\mu_0}{\sigma_0}\right)}\right] = b,$$   
we obtain ${T}_{0,b}=\mu_0+\sigma_0 \ln\{-\ln b\}$. The ML estimate $\widehat{T_{0,b}}$ of ${T}_{0,b}$ is $\widehat{\mu_0}+\widehat{\sigma_0} \ln\{-\ln b\}$. Now, for $g^*(b)=\ln \{ - \ln b\}$, \begin{equation}\label{vq1}
\mathbb{V}(\widehat{T_{0,b}})=\sigma^2_{\mu_0}+{(g^*(b))^2}\sigma^2_{\sigma_0}+2 g^*(b)\sigma_{\mu_0, \sigma_0}
\end{equation}
where $\sigma^2_{\mu_0}$, $\sigma^2_{\sigma_0}$ and $\sigma_{\mu_0, \sigma_0}$ are defined in Theorem \ref{varw}.  Let the variance of the aggregate quantile of log-lifetimes is denoted by  $\mathcal{V}_Q(n, \bm \pi, \bm \xi, \tau_0)$, and is expressed by
\begin{flalign}\label{vq2}
    \mathcal{V}_Q(n, \bm \pi, \bm \xi, \tau_0)& =\int_{0}^1 \mathbb{V}( \widehat{T_{0,b}})db \nonumber \\ & = \sigma^2_{\mu_0}+\sigma^2_{\sigma_0} \int_0^{1}{(g^*(b))^2} db +2 \sigma_{\mu_0, \sigma_0} \int_0^{1} g^*(b)db \nonumber \\
    &=\sigma^2_{\mu_0}+\sigma^2_{\sigma_0} \left\{\gamma^2+\frac{\pi^2}{6}\right\} - 2 \gamma \sigma_{\mu_0, \sigma_0}
\end{flalign}
where $\gamma \approx 0.5772$ is the Euler-Mascheroni constant. Consequently, a relevant optimal design problem is given by 
\begin{flalign*}
   & \text{minimize } \mathcal{V}_Q(n, \bm \pi, \bm \xi, \tau_0) \\
     & \text{subject to } \left[\frac{ {\sigma^2_{\mu_0}+k^2\sigma^2_{\sigma_0}-2k \sigma_{\mu_0, \sigma_0}}}{\sigma_0^2} \left(\frac{z_{\alpha}-z_{1-\beta}}{u_{p_{\alpha}} - u_{p_{\beta}}}\right)^2-1\right]=0, \sum_{i=0}^m \pi_i=1, \xi_0=0 \text{ and } \xi_m=1.
\end{flalign*}
The algorithm for constrained variance minimization is detailed below.
\begin{algorithm}
\caption{Algorithm for the variance of the aggregate quantile of log-lifetime minimization}\label{alg2}
%\par\noindent\rule{\textwidth}{0.4pt}
{\bf Given fixed quantities}: $\xi_0=0, \xi_m=1, \pi_0=0.20$.\\

{\bf Input: }$N, \alpha, \beta, p_{\alpha}, p_{\beta}, \Xi_{\mu}=(\xi_{\mu, 0}, \dots, \xi_{\mu,Q_1})^{\intercal}, \bm \gamma_{\mu}=(\gamma_{\mu, 0}, \dots, \gamma_{\mu, Q_1})^{\intercal}$,  $\Xi_{\sigma}=(\xi_{\sigma, 0}, \dots, \xi_{\sigma,Q_2})^{\intercal}$ and $\bm \gamma_{\sigma}=(\gamma_{\sigma, 0}, \dots, \gamma_{\sigma, Q_2})^{\intercal}$.\\

{\bf Output: }$n, \xi_1, \dots, \xi_{m-1}, \pi_1, \dots, \pi_{m-1}, \tau_0, \ln{\tau_0}$ and $V_{\min}$.\\

{\bf Steps}:
\begin{itemize}
\item [1.] Provide initial values for $n, \xi_1, \dots, \xi_{m-1}, \pi_1, \dots, \pi_{m-1}$ and  $\tau_0$. Note that $\bm \theta=(\bm \gamma_{\mu}^{\intercal}, \bm \gamma_{\sigma}^{\intercal})^{\intercal}$.
\item [2.] Following the steps leading to Equation (\ref{Finv}) and Theorem \ref{varw}, calculate $F^{-1}(\bm \theta)$ and $\sigma^2_{\mu_0}+k^2\sigma^2_{\sigma_0}-2k \sigma_{\mu_0, \sigma_0}$ respectively.
\item [3.] If $\sigma^2_{\mu_0}+k^2\sigma^2_{\sigma_0}-2k \sigma_{\mu_0, \sigma_0} \ge 0$, then define  the objective function $\mathcal{V}_Q(n, \bm \pi, \bm \xi, \tau_0)$ using Equation (\ref{vq2}); else return ($10^{12}$). Here, $\bm \pi=(\pi_1, \dots, \pi_m)^{\intercal}$ and $\bm \xi=(\xi_1, \dots, \xi_{m-1})^{\intercal}$.
\item [4.] Define an inequality constraint sub-routine (INEQ) by specifying $0<\xi_1 < \dots < \xi_{m-1}<1$.
\item [5.] Further, define an equality constraint sub-routine (EQLT) by specifying \\  
 $\left[\frac{\sigma^2_{\mu_0}+k^2\sigma^2_{\sigma_0}-2k \sigma_{\mu_0, \sigma_0}}{\sigma_0^2} \left(\frac{z_{\alpha}-z_{1-\beta}}{u_{p_{\alpha}} - u_{p_{\beta}}}\right)^2-1\right]=0$, $\sum_{i=0}^m \pi_i=1$, and $\xi_0=0 \text{ and } \xi_m=1$.    
\item [6.] Apply minimization sub-routine on $\mathcal{V}_Q(n, \bm \pi, \bm \xi, \tau_0)$ subject to INEQ and EQLT mentioned in steps 4 and 5, respectively.  {\tt nloptr} sub-routine within package {\tt NLOPTR} in {\tt R version 4.3.0} could be used. 
\item [7.] $V_{\min}$ (the minimized value of the objective function) along with corresponding parameters mentioned as output is obtained.
\item [8.] Calculate $n_i=\lfloor{n\pi_i}\rfloor$ for $i=1, \dots, m$ and $n_0=n-\sum_{i=1}^m n_i$.
\end{itemize}
%\par\noindent\rule{\textwidth}{0.4pt}
\end{algorithm}

\section{Numerical Illustration} \label{sec5}
To demonstrate the optimal ALTSPs, a numerical study is conducted in this section with the help of algorithms 1 and 2 as discussed in Section \ref{sec4} and six choices of the input parameters $(\alpha, \beta, p_{\alpha}, p_{\beta})$ following MIL-STD-105D (\citealp{united1989sampling}). The choices in sequence are as follows: 
$$\text{ Case 1: } (0.050, 0.100, 0.021, 0.074),$$ 
$$\text{ Case 2: } (0.050, 0.100, 0.032, 0.094),$$  
$$\text{ Case 3: } (0.050, 0.100, 0.019, 0.054),$$ 
$$\text{ Case 4: } (0.100, 0.100, 0.021, 0.074),$$
$$\text{ Case 5: } (0.100, 0.100, 0.032, 0.094),$$
and
$$\text{ Case 6: } (0.100, 0.100, 0.019, 0.054).$$ 
The lot size $N$ is chosen as $1000$ for all cases. As discussed in Section \ref{sec4}, 5 stress levels are considered and the corresponding standardized stress levels are denoted by $\xi_0=0.000, \dots, \xi_4=1.000$. The sample size, i.e., the number of items to be tested under each stress level is taken as $n_i, i=0, \dots, 4$ with $n=\sum_{i=1}^4 n_i$. $\tau_0$ denotes the optimal censoring time to stop the ALT experiment. $C_{\min}$ and $V_{\min}$, respectively, represent the optimal minimum aggregate cost and minimum variance of the aggregate quantile of log lifetimes. The optimization results are provided in Table \ref{Table2} and Table \ref{Table3}. It can be observed from the tables that the minimum aggregate cost is obtained ($625.989$) for Case 2, while the minimum variance is obtained ($0.017$) for Case 5. Though an upper bound of 20\% of the lot size $N$ is set as the total sample size ($n$) for an experiment or plan, we note that optimal sample sizes are well below the upper limit for almost all cases. Hence, for example, if we observe Case 2 of Table \ref{Table2}, we may deduce that a total of $180$ items out of $1000$ items are put to test under ALT with 5 accelerated levels of standardized stress, namely, $0.000, 0.066, 0.323, 0.488$ and $1.000$ with respective sample sizes $37, 10, 54, 23$ and $56$. The ALT experiment has to be stopped at the censoring time of $5.551$ units. These optimal values lead to a minimum aggregate cost of $625.989$ units. \\

It can be observed that Case 2 results in the minimum aggregate cost whereas Case 5 results in the minimum overall variance of the quantiles. A natural skepticism may arise for deciding the set-up of input parameters that will yield the most optimized result. However, from a practitioner's point of view, it should depend on the intent and applicability. If one wants to minimize aggregate cost under general rebate warranty, then input parameter set-up corresponding to Case 2 can be chosen for the life testing experiment. On the other hand, if minimizing overall variance seems more plausible (as in the case where general rebate warranty does not apply or information is unavailable), then one may choose the input parameter set-up corresponding to Case 5. Nonetheless, the optimized overall variance of quantiles values for all cases are quite similar and low. Hence, even if one chooses Case 2 instead of Case 5, it seems unlikely to exhibit significantly higher overall variance of the quantiles.   

% Table generated by Excel2LaTeX from sheet 'CM Table'
\begin{table}[!htbp]
\centering
\caption{Results - Total cost minimization}\label{Table2}
%\resizebox{\columnwidth}{!}{
% Table generated by Excel2LaTeX from sheet 'VM Table'
% Table generated by Excel2LaTeX from sheet 'Sheet1'
\begin{tabular}{c|c|c|c|c|c|c}
\hline
Parameter & Case 1 & Case 2 & Case 3 & Case 4 & Case 5 & Case 6 \\
\hline
   $\alpha$ &   0.050 &      0.050 &      0.050 &      0.100 &      0.100 &      0.100 \\

   $\beta$ &  0.100 &      0.100 &      0.100 &      0.100 &      0.100 &      0.100 \\

    $p_{\alpha}$ &   0.021 &      0.032 &      0.019 &      0.021 &      0.032 &      0.019 \\

 $p_{\beta}$ &    0.074 &      0.094 &      0.054 &      0.074 &      0.094 &      0.054 \\

   $n$ &    199 &        180 &        134 &        140 &         95 &        197 \\

  $\xi_0$ &     0.000 &      0.000 &      0.000 &      0.000 &      0.000 &      0.000 \\

  $\xi_1$ &     0.087 &      0.066 &      0.068 &      0.052 &      0.054 &      0.060 \\

    $\xi_2$ &   0.319 &      0.323 &      0.350 &      0.320 &      0.321 &      0.345 \\

   $\xi_3$ &    0.476 &      0.488 &      0.580 &      0.414 &      0.614 &      0.604 \\

    $\xi_4$ &   1.000 &      1.000 &      1.000 &      1.000 &      1.000 &      1.000 \\

     $n_0$ &   42 &         37 &         29 &         30 &         21 &         41 \\

       $n_1$ & 20 &         10 &          8 &          7 &          4 &         17 \\

     $n_2$ &   54 &         54 &         41 &         20 &         28 &         68 \\

      $n_3$ &  47 &         23 &         13 &         17 &         17 &         13 \\

      $n_4$ &  36 &         56 &         43 &         66 &         25 &         58 \\

    $\ln \tau_0$ & 0.685 &      1.714 &      1.495 &      0.901 &      1.257 &      1.959 \\

    $\tau_0$ & 1.984 &      5.551 &      4.459 &      2.462 &      3.515 &      7.092 \\

   $C_{\min}$ & 650.768 &    625.989 &    688.780 &    658.168 &    645.163 &    652.406 \\
\hline
\end{tabular}  

%}\\
\footnotesize{}
\end{table}

% Table generated by Excel2LaTeX from sheet 'VM Table'
\begin{table}[!htbp]
\centering
\caption{Results - Aggregate quantile of log-lifetime variance minimization}\label{Table3}
%\resizebox{\columnwidth}{!}{
% Table generated by Excel2LaTeX from sheet 'VM Table'
\begin{tabular}{c|c|c|c|c|c|c}
\hline
Parameter & Case 1 & Case 2 & Case 3 & Case 4 & Case 5 & Case 6 \\
\hline
     $\alpha$ &      0.050 &      0.050 &      0.050 &      0.100 &      0.100 &      0.100 \\

      $\beta$ &      0.100 &      0.100 &      0.100 &      0.100 &      0.100 &      0.100 \\

   $p_{\alpha}$ &      0.021 &      0.032 &      0.019 &      0.021 &      0.032 &      0.019 \\

    $p_{\beta}$ &      0.074 &      0.094 &      0.054 &      0.074 &      0.094 &      0.054 \\

         $n$ &        157 &        189 &        149 &        164 &         41 &         45 \\

     $\xi_0$ &      0.000 &      0.000 &      0.000 &      0.000 &      0.000 &      0.000 \\

     $\xi_1$ &      0.054 &      0.054 &      0.207 &      0.099 &      0.125 &      0.121 \\

     $\xi_2$ &      0.338 &      0.315 &      0.362 &      0.321 &      0.304 &      0.346 \\

     $\xi_3$ &      0.485 &      0.613 &      0.555 &      0.507 &      0.458 &      0.415 \\

     $\xi_4$ &      1.000 &      1.000 &      1.000 &      1.000 &      1.000 &      1.000 \\

       $n_0$ &         33 &         39 &         31 &         34 &         11 &         10 \\

       $n_1$ &         31 &         12 &          9 &         20 &         10 &          4 \\

       $n_2$ &         32 &         37 &         37 &         25 &          5 &         13 \\

       $n_3$ &         24 &         57 &         12 &         62 &         13 &          4 \\

       $n_4$ &         37 &         44 &         60 &         23 &          2 &         14 \\

$\ln \tau_0$ &      0.941 &      1.930 &      1.657 &      1.338 &      1.218 &      1.525 \\

    $\tau_0$ &      2.563 &      6.890 &      5.244 &      3.811 &      3.380 &      4.595 \\

   V$_{\min}$ &      0.039 &      0.093 &      0.024 &      0.024 &      0.017 &      0.119 \\
\hline
\end{tabular}  
%}\\
\footnotesize{}
\end{table}

\section{Case Illustration and Comparison} \label{sec6}
One of the accomplishments of our developed model and method lies in capturing the non-linear dependence of product characteristics with respect to stress levels. Therefore, we consider a simulation study very similar to the one described in \cite{si2017accelerated} to demonstrate a better fit. An ALT experiment is assumed where product or component failure depends on external temperature (measured in Kelvin) applied to the system. The following well-known empirical relationships between true mean ($\mu^*$) and SD ($\sigma^*$) of product lifetimes (in hours) with a single stress factor temperature ($s$) are taken as: 

\begin{equation}\label{mu_s}
    \mu^*(s)=\frac{1}{R(s,v)} \left(\frac{v}{\delta_0}\right)^{\gamma_1} 
\end{equation}
and
\begin{equation}\label{sig_s}
    \sigma^*(s)=k_2 (\mu^*(s))^z 
\end{equation}
where $R(s, v)=\gamma_0 \exp\left\{-\frac{E_a}{k_1 s}+\gamma_2 \ln\{v\}+\frac{\gamma_3 \ln\{v\}}{k_1 s}\right\}$ and $v$ represents voltage (measured in Volt) applied to the component. Here, $\gamma_0=10^{-4},\gamma_1=1, \gamma_2=2, \gamma_3=1000, \delta_0=1$, $E_a=10000$ J/mol, ideal gas constant $k_1=8.314$ J mol$^{-1} K^{-1}$, $k_2=0.1$ and $z=1.2$ are experimental constants or component specific parameters (see \citealp{escobar2006review,si2017accelerated}). \\

For data simulation, we chose $v=170$V. The accelerating stress factor temperature $s$ is considered to have $m=6$, i.e., $7$ stress levels as $s_0=320$ (usage condition), $s_1=340, s_2=355, s_3=370, s_4=385, s_5=400$ and  $s_6=415$ K. The product lifetimes $x_{ij}$ are generated from Weibull distribution with shape parameter $\alpha_i=(\sigma^*(s_i))^{-1}$ and scale parameter $\lambda_i=\exp\{-\mu^*(s_i)\}$
for $i=0, \dots, m=6$ and $j=1, \dots, n_i=100$ for all $i$ with $\mu^*(s_i)$ and $\sigma^*(s_i)$  defined according to Equations (\ref{mu_s}) and (\ref{sig_s}). The pre-determined censoring time $\tau_0$ is chosen as $350$ hours which resulted in 15\% Type-I time censored data approximately. $t_{ij}=\ln\{x_{ij}\}$ are the log-lifetimes. Note that $\mu$ and $\sigma$ are the location and scale parameters of the EV distribution, and not the mean and SD. However, $\mu^*$ and $\sigma^*$ are the mean and SD of the simulated Weibull lifetimes.  \\

To the simulated data, two candidate models - a non-linear link-based PLA model as defined by Equations (\ref{PLAdefmu}) and (\ref{PLAdefsig}), and a linear link-based model are fitted. The stress levels $s_i$ are standardized first to obtain $\xi_i=\frac{s_i-s_{\min}}{s_{\max}-s_{\min}}$. The linear link based model assumes  $$\mu_i=\mu(\xi_i; \gamma_{\mu_0}, \gamma_{\mu_1})=\gamma_{\mu_0}+\gamma_{\mu_1}\xi_i$$ and $$\sigma_i=\sigma(\xi_i; \gamma_{\sigma_0}, \gamma_{\sigma_1})=\gamma_{\sigma_0}+\gamma_{\sigma_1}\xi_i$$ for $i=0, \dots, m=6$. The likelihood function for both models has the same structure as given in Equation (\ref{likeli}).  For the non-linear PLA model,  minimum, 33-rd percentile, 67-th percentile and maximum values, and minimum, 50-th percentile, and maximum values of the standardized stress levels are considered as the cut-points for linking $\mu$ and $\sigma$ to $\xi$ respectively. The log-likelihood function is maximized with respect to the parameters using sub-routine {\tt optim} with {\tt method="Nelder-Mead"}  in package {\tt optimx} in {\tt R version 4.3.0} for both models.\\

For the purpose of comparison, two criteria, namely, maximized log-likelihood value or $\hat l$ (i.e., value of the log-likelihood function at optimized parameter vector) and Akaike information criterion (AIC = $-2\hat l + \text{penalty}$) are considered. The `penalty' is the term that penalizes the likelihood function if there exists an overfitting (e.g., a large number of redundant parameters). For AIC, `penalty' is taken as $2$ times the number of parameters estimated during model fitting. The linear link-based model resulted in an average maximized log-likelihood value of $-7796.093$ (AIC$=15600.191$) with an empirical SD of $1096.484$ based on 100 replications. On the contrary, the non-linear link-based model resulted in an average maximized log-likelihood value of $-2390.911$ (AIC$=4795.823$) with an empirical SD of $328.419$ based on $100$ replication. We have chosen various numbers and values of the cut-points to link $\mu$ and $\sigma$ to $\xi$ in the case of the non-linear PLA-based model. However, since not much differences are noticed, the results for all choices are not reported for the sake of brevity. A stark difference in the average maximized log-likelihood value and AIC between the two models exists. This further establishes the requirement of a generalized link structure to capture the non-linearity present between lifetime characteristics and stress levels.  

\section{Concluding Remarks} \label{sec7}
Acceptance sampling plan plays a significant role in industrial manufacturing to reduce the time and cost of inspection of products. These plans require a more careful design when the lifetime of the product, in particular, comes with a warranty. The product lifetimes cannot be inspected conventionally. Appropriate censoring scheme(s) and accelerating factor(s) are required to carry out such inspection to observe failures in a limited time duration. Moreover, such a plan needs to ensure that the cost or any other technical criterion gets optimized. \\

Many researchers have shown their interest in developing such sampling plans, popularly known as accelerated life testing plans. The major assumption in these ALTSPs lies in the relationship between lifetime characteristics and accelerating stress factors which is assumed to be linear or log-linear in a majority of the cases. The very assumption may produce biased estimates which in turn may lead to an inefficient sampling design. This paper attempts to address the issue with a more flexible and general relationship. Consequently, the inclusion of non-linear dependence through the piecewise linear function is one of the major contributions in the paper to design an ALTSP.   \\

 Another contribution lies in the assumption of the non-identical nature of the Weibull distribution that is used to model product lifetime. The existing body of the literature in the ALTSP domain assumes the shape parameter of Weibull distribution as constant and the non-identical scale parameter is only shown to have a linear relationship with stress. The only article by \cite{seo2009design} assumed a non-identical Weibull distributed lifetime with both parameters linked with the stress factors linearly. This paper presents a design for ALTSP with Type-I censored non-identical Weibull distributed product lifetime with a non-linear relationship between the parameters (lifetime characteristics) and the stress factor. No such ALTSP, to the best of our knowledge, has been designed in the literature to date. The acceptability criterion for the conforming lots is formulated with the help of Fisher information matrix. The elements of the matrix are computed in great detail, and the asymptotic mean and variance of the acceptability criterion are also derived. \\ 

The proposed design minimizes the expected aggregated cost and the variance of the quantile of the log-lifetime distribution. The constraint of both optimization problems includes the mutually agreed upon producer's and consumer's risks. The entire design plan is demonstrated by a simulated case study. Under the same optimal criterion, it is found that the non-linear PLA-based relationship plan performs way better than the linear counterpart. The proposed ALTSP answers the following questions to meet the optimum criterion for given risks.
\begin{itemize}
\item [1.] How many products are to be tested?
\item [2.] What stress levels are to be used?
\item [3.] How many products are to be tested at each stress level? 
\item [4.] At what time will the experiment be terminated?
\end{itemize}

 The methodology used in the paper can be applied to any log location-scale family of distributions to design an optimal ALTSP. Although computationally intensive, designing an optimal plan using a spline with higher order can be interesting future work. Finally, other interesting problems for future research could include the design of similar ALTSP under the Bayesian framework. 

\section{Compliance with Ethical Standards}
{\bf Funding:} No funding has been received for this research.\\

{\bf Disclosure of potential Conflicts of Interest:} The authors report that there are no competing interests to declare. The authors have declared no conflict of interest.\\

{\bf Ethics approval:} This article does not contain studies with human participants or animals performed by any of the authors. \\

{\bf Informed Consent:} Not applicable.\\

{\bf Data Availability Statement:} No public or private data have been used in this investigation. Simulated data are used for demonstration. R codes are available upon request.\\

\bibliographystyle{apacite}
\bibliography{spline}

\end{document}